\ifCLASSOPTIONcompsoc \usepackage[caption=false,font=normalsize,labelfon
\newcommand{\displaycomments}
\newcommand{\note}[1]{\textcolor{red}{\emph{#1}}}
\newcommand{\note}[1]{}
\newcommand{\displayold}
\newcommand{\old}[1]{\textcolor{blue}{{#1}}}
\newcommand{\old}[1]{}
\DeclareMathOperator*{\argmin}{argmin}
\newcommand{\iverson}[1]{[\![#1]\!]}
\newcommand{\E}{\mathop{\mathbb E}}
\renewcommand{\t}[1]{\mathrm{T}#1}
\newcommand{\N}{\mathcal{N}}
\def\z{{\mathbf z}}
\renewcommand{\d}{\;\mathrm{d}}
\renewcommand{\t}{\mathrm{T}}
\newcommand{\tr}{\operatorname{tr}}
\newcommand{\var}{\operatorname{Var}}
\newtheorem{lemma}{Lemma}
\newcommand{\ST}{\operatorname{ST}}
\newcommand{\T}{\operatorname{T}}
\newcommand{\St}{\operatorname{t}}
\newcommand{\G}{\mathcal{G}}
\newcommand{\pluseq}{\stackrel{+}{=}}
\renewcommand{\d}[1]{\;\mathrm{d}#1}
\renewcommand{\t}[1]{\mathrm{T}#1}
\newcommand{\figs}{figures/}
\newcommand{\eye}{I}
\newcommand{\zeros}{\mathrm{O}}
\newcommand{\STRTVBF}{STF\xspace}
\newcommand{\STRTVBS}{STS\xspace}
\newcommand{\recursive}{sequential\xspace}
\title{Skew-$t$ Filter and Smoother with\\Improved Covariance Matrix Approximation}
\author{Henri Nurminen, Tohid Ardeshiri, Robert Pich\'e,~\IEEEmembership{Senior Member,~IEEE}, and Fredrik Gustafsson,~\IEEEmembership{Fellow,~IEEE}
\thanks{H.\ Nurminen and R.\ Pich\'e are with the Laboratory of Automation and Hydraulic Engineering, Tampere University of Technology (TUT), PO Box 692, 33101 Tampere, Finland  (e-mails: henri.nurminen@here.com, robert.piche@tut.fi). H.\ Nurminen has received funding from TUT Graduate School, the Foundation of Nokia Corporation, Tekniikan edist\"amiss\"a\"ati\"o, and Emil Aaltonen Foundation. Henri Nurminen is currently with HERE Technologies Inc.}
\thanks{T.\ Ardeshiri is with the Division of Automatic Control, Department of Electrical Engineering, Link\"oping University, 58183, Link\"oping, Sweden and has received funding from Swedish research council (VR), project scalable Kalman filters for this work. 
T.\ Ardeshiri is currently with the Department of Engineering, University of Cambridge, Trumpington Street, Cambridge, CB2 1PZ, UK, (e-mail: ta417@cam.ac.uk).}
\thanks{F.\ Gustafsson is with the Division of Automatic Control, Department of Electrical Engineering, Link\"{o}ping University, 58183 Link\"{o}ping, Sweden, (e-mail: fredrik@isy.liu.se).}
}
\begin{document}
\def\figurename{Fig.}
\maketitle

\thispagestyle{fancy}
\renewcommand{\headrulewidth}{0pt}
\lfoot{\footnotesize \vspace{-9mm} \color{red} (c) 2018 IEEE. Personal use of this material is permitted. Permission from IEEE must be obtained for all other users, including reprinting/republishing this material for advertising or promotional purposes, creating new collective works for resale or redistribution to servers or lists, or reuse of any copyrighted components of this work in other works. Citation: H.\ Nurminen, T. Ardeshiri,  R.\ Pich\'e, and F.\ Gustafsson, ``Skew-$t$ Filter and Smoother with Improved Covariance Matrix Approximation'', \textit{IEEE Transactions on Signal Processing}, vol.\ 66, no.\ 21, pp.\ 5618--5633, 2018. DOI: 10.1109/TSP.2018.2865434}

\begin{abstract}
Filtering and smoothing algorithms for linear discrete-time state-space models with skew-$t$-distributed measurement noise are proposed. The algorithms use a variational Bayes based posterior approximation with coupled location and skewness variables to reduce the error caused by the variational approximation. Although the variational update is done suboptimally using an expectation propagation algorithm, our simulations show that the proposed method gives a more accurate approximation of the posterior covariance matrix than an earlier proposed variational algorithm. Consequently, the novel filter and smoother outperform the earlier proposed robust filter and smoother and other existing low-complexity alternatives in accuracy and speed. We present both simulations and tests based on real-world navigation data, in particular GPS data in an urban area, to demonstrate the performance of the novel methods. Moreover, the extension of the proposed algorithms to cover the case where the distribution of the measurement noise is multivariate skew-$t$ is outlined. Finally, the paper presents a study of theoretical performance bounds for the proposed algorithms.
\end{abstract}

\begin{keywords}
\! skew $t$, $t$-distribution, robust filtering, Kalman filter, RTS smoother, variational Bayes, expectation propagation, truncated normal distribution, Cram\'er--Rao lower bound
\end{keywords}


\section{Introduction} \label{sec:introduction}

Asymmetric and heavy-tailed noise processes are present in many inference problems. In radio signal based distance estimation \cite{GusGun2005,BorsenChen2009,Kok2015}, for example, obstacles cause large positive errors that dominate over symmetrically distributed errors from other sources \cite{kaemarungsi2012}.  An example of this is the error histogram of time-of-flight in distance measurements collected in an indoor environment given in Fig.\ \ref{fig:error_hist}. The asymmetric distributions cannot be predicted by the normal or $t$-distributions that are equivalent in second order moments, because normal and $t$-distributions are symmetric distributions. The skew $t$-distribution \cite{branco2001,azzalini2003, gupta2003skew} is a generalization of the $t$-distribution that has the modeling flexibility to capture both skewness and heavy-tailedness of such noise processes. To illustrate this, Fig.~\ref{fig:2Diidcontours} shows the contours of the likelihood function for three range measurements where some of the measurements include large positive errors. In this example, skew-$t$, $t$, and normal measurement noise models are compared. Due to the additional modeling flexibility, the skew-$t$ based likelihood provides a more apposite spread of the probability mass than the normal and $t$ based likelihoods.

\begin{figure}
\centering
\includegraphics[width=0.98\columnwidth,trim=10mm 30mm 10mm 35mm,clip=true]{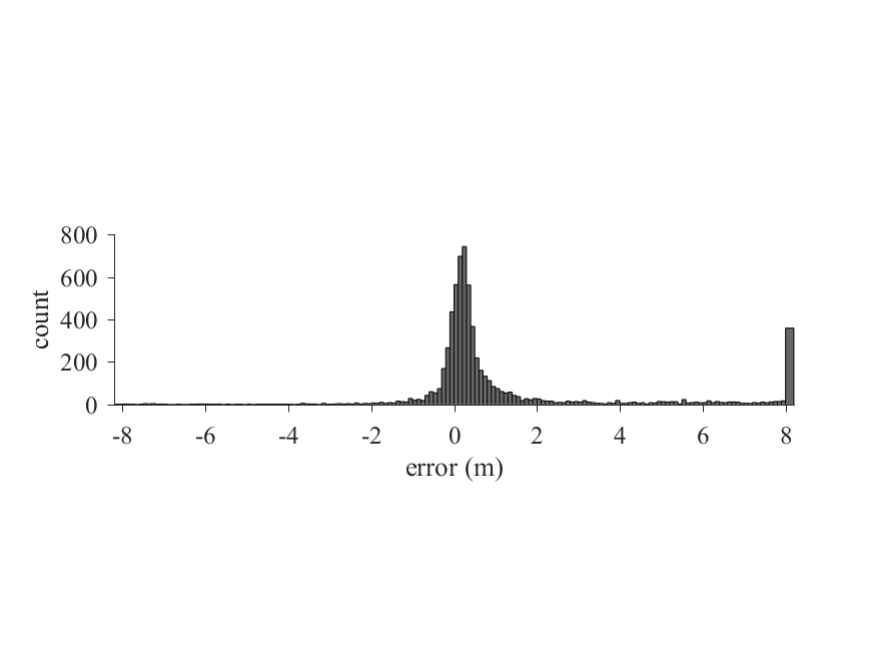}
\vspace{-3mm}
\caption{The error histogram in an ultra-wideband (UWB) ranging experiment described in \cite{nurminen2015b} shows positive skewness. The edge bars show the errors outside the figure limits.} \label{fig:error_hist}
\end{figure}
\begin{figure}
\centering
\hspace{-30mm}
\includegraphics[width=0.6\columnwidth]{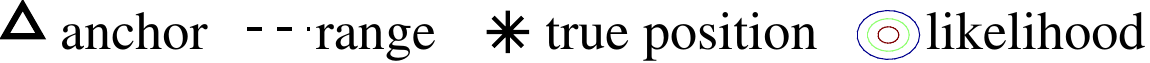}\\\vspace{1mm}
\includegraphics[trim=0mm 0mm 0mm 0mm, clip, width=0.31\columnwidth]{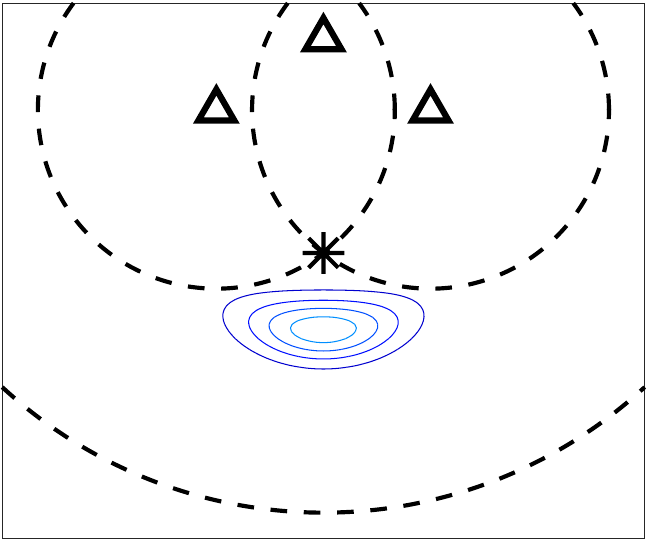}
\includegraphics[trim=0mm 0mm 0mm 0mm, clip, width=0.31\columnwidth]{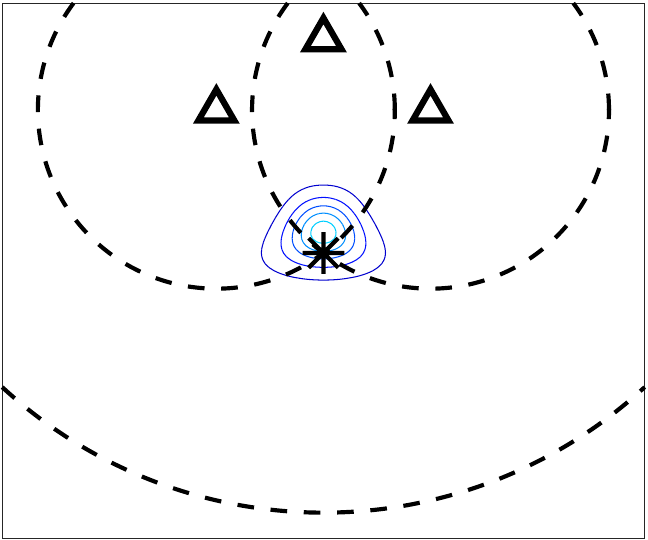}
\includegraphics[trim=0mm 0mm 0mm 0mm, clip, width=0.31\columnwidth]{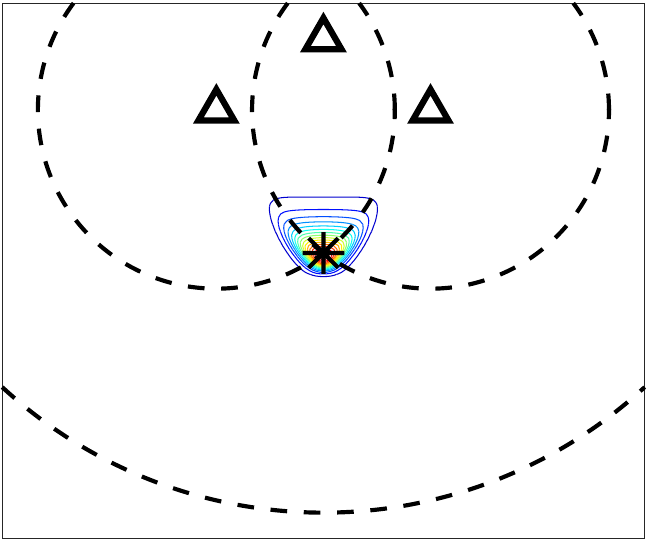}

\vspace{2mm}
\includegraphics[trim=0mm 0mm 0mm 0mm, clip, width=0.31\columnwidth]{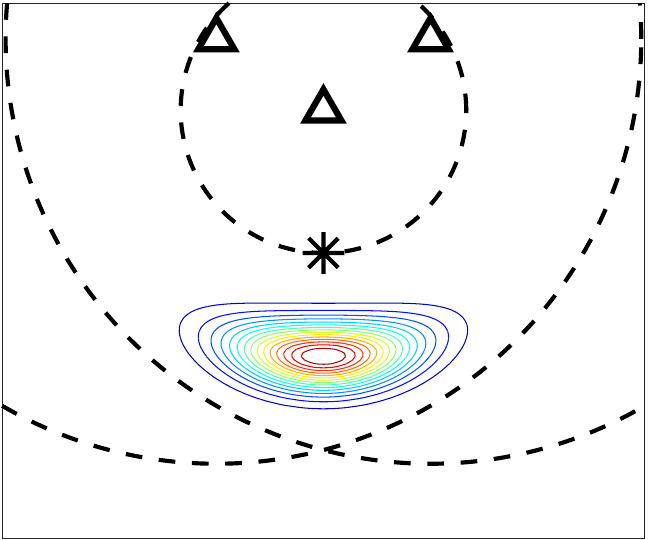}
\includegraphics[trim=0mm 0mm 0mm 0mm, clip, width=0.31\columnwidth]{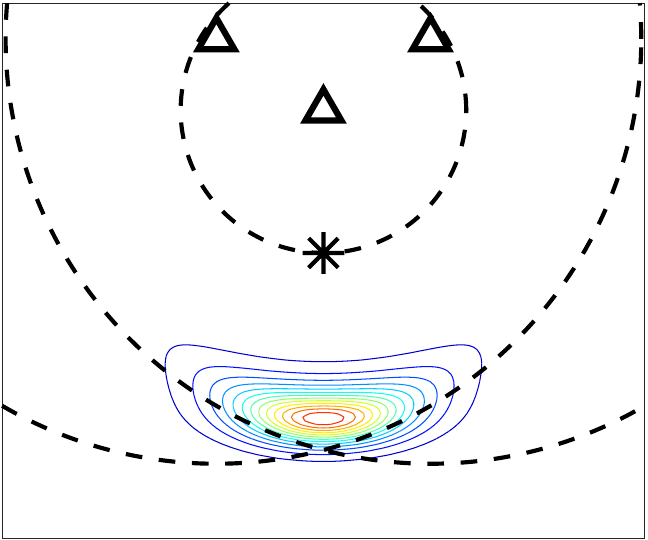}
\includegraphics[trim=0mm 0mm 0mm 0mm, clip, width=0.31\columnwidth]{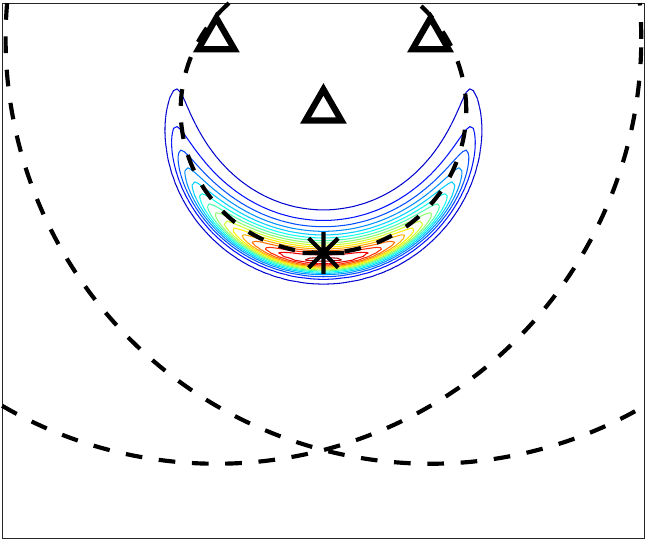}
\caption{The likelihood contours of distance measurements from three known anchors for the normal (left), $t$ (middle) and skew-$t$ (right) measurement noise models. The $t$ and skew-$t$ based likelihoods handle one large positive error (upper row), while only the skew-$t$ model handles the two large positive errors (bottom row) due to its asymmetry. The measurement model parameters are selected such that the degrees-of-freedom values and the first two moments coincide.} \label{fig:2Diidcontours}
\end{figure}

The applications of the skew distributions are not limited to radio signal based localization. In biostatistics skewed distributions are used as a modeling  tool for handling heterogeneous data involving asymmetric behaviors across subpopulations~\cite{fruhwirth2010}. 
In psychiatric research skew normal distribution is used to model asymmetric data~\cite{counsell2010}.  
Further, in economics skew normal and skew $t$-distributions are used as models for describing claims in property-liability insurance~\cite{eling2012}. 
 More examples describing approaches for analysis and modeling using multivariate skew normal and skew $t$-distributions in
econometrics and environmetrics are presented in~\cite{marchenko2010phd}. 

There are various algorithms dedicated to statistical inference of time series when the data exhibit asymmetric distribution. Particle filters \cite{doucet2000} can easily be adapted to skew noise distributions, but the computational complexity of these filters increases rapidly as the state dimension increases. A skew Kalman filter is proposed in \cite{naveau2005}, and in \cite{kim2014} this filter is extended to a robust scale-mixture filter using Monte Carlo integration. These solutions are based on state-space models where the measurement noise is a dependent process with skewed marginals. The article \cite{Rezaie2014} proposes filtering of independent skew measurement and process noises with the cost of increasing the filter state's dimension over time. In all the skew filters of \cite{naveau2005,kim2014,Rezaie2014}, sequential processing requires numerical evaluation of multidimensional integrals. The inference problem with skew likelihood distributions can also be cast into an optimization problem; \cite{Kok2015} proposes an approach to model the measurement noise in an ultra-wideband (UWB) based positioning problem using a tailored half-normal--half-Cauchy distribution. Skewness can also be modeled by a mixture of normal distributions (Gaussian mixtures, GM) \cite{GusGun2005}. There are many filtering algorithms for GM distributions such as Gaussian sum filter \cite{gsum1972} and interactive multiple model (IMM) filter \cite{bar1988tracking}. However, GMs have exponentially decaying tails and can thus be too sensitive to outlier measurements. Furthermore, in order to keep the computational cost of a Gaussian sum filter practicable, a mixture reduction algorithm (MRA) \cite{maybeck2006} is required, and these MRAs can be computationally expensive and involve approximations to the posterior density.

Variational Bayes (VB) method -based filtering and smoothing algorithms for linear discrete-time state-space models with skew-$t$ measurement noise are proposed in~\cite{nurminen2015a}. The VB approach avoids the increasing filter state dimensionality and numerical integrations by finding an optimal approximation with the constraint that the state is independent of the non-dynamic latent variables; this makes analytical marginalisation straightforward. To our knowledge, VB approximations have been applied to the skew $t$-distribution only in our earlier works \cite{nurminen2015a, nurminen2015b}, and by Wand et al.\ \cite{wand2011}, and the latter use a VB factorization different from ours and do not consider time-series inference. In tests with real UWB indoor localization data \cite{nurminen2015b}, this filter is shown to be accurate and computationally inexpensive.

This paper proposes improvements to the robust filter and smoother proposed in \cite{nurminen2015a}. 
Analogous to \cite{nurminen2015a}, the measurement noise is modeled by the skew $t$-distribution, and the proposed filter and smoother use VB approximations of the filtering and smoothing posteriors. However, the main contributions of this paper are (1) a novel VB factorization of the posterior and showing that at highly skewed models this factorization provides major improvement in both convergence speed of the VB iterations and accuracy of the estimate and covariance matrix, (2) an application of an existing expectation propagation (EP) based algorithm for approximating the statistics of a truncated multivariate normal distribution (TMND) that appears in the proposed VB algorithm, (3) a derivation of a greedy approach for a truncation ordering in the EP approximation of the TMND's moments, (4) a derivation of the Cram\'er--Rao lower bound (CRLB) for the proposed filter and smoother, and (5) the variational lower bound for the proposed VB factorization. A TMND is a multivariate normal distribution whose support is restricted (truncated) by linear constraints and that is re-normalized to integrate to unity. The aforementioned contributions improve the estimation performance of the skew-$t$ filter and smoother by reducing the covariance matrix underestimation common to many VB inference algorithms~\cite[Chapter 10]{Bishop2007}. This is shown by evaluating the proposed algorithms with both simulations and real-data tests in positioning using GNSS (global navigation satellite system) based pseudorange measurements. The tests show clear improvement in estimation accuracy compared to state-of-the-art low-complexity algorithms. In both simulations and real-data tests the proposed algorithms also outperform the earlier VB-based methods of \cite{nurminen2015a} in both estimation accuracy and speed of computations.

The rest of this paper is structured as follows. In Section~\ref{sec:problem_formulation}, the filtering and smoothing problem involving the univariate skew $t$-distribution is posed. In Section~\ref{sec:solution} a solution based on VB for the formulated problem is proposed. The proposed solution is evaluated in Sections~\ref{sec:simulations} and \ref{sec:realdata}. The essential expressions to extend the proposed filtering and smoothing algorithms to problems involving multivariate skew-$t$ (MVST) distribution are given in Section~\ref{sec:extension}. Performance bounds for time series data with MVST-distributed measurement noise are derived and evaluated in simulation in Section~\ref{sec:bound}. The concluding remarks are given in Section~\ref{sec:conclusions}.      


\section{Inference Problem formulation} \label{sec:problem_formulation}
Consider the linear and Gaussian state evolution model
\begin{subequations}
\label{eq:state-evo}
\begin{align}
p(x_1)&=\N(x_1;x_{1|0},P_{1|0}),\label{eq:prior} \\
x_{k+1}&=\ Ax_k+w_k, & w_k&\stackrel{\text{iid}}{\sim}\N(0,Q),
\end{align}
\end{subequations}
where 
 $\N(\cdot;\mu,\Sigma)$ denotes the probability density function (PDF) of the (multivariate) normal distribution with mean $\mu$ and covariance matrix $\Sigma$;
 $A \in \mathbb{R}^{n_x\times n_x}$ is the state transition matrix;
 $x_k\in\mathbb{R}^{n_x}$ indexed by $1\!\leq\! k\!\leq\! K$ is the state to be estimated with initial prior distribution \eqref{eq:prior}, 
 where the subscript ``$a|b$" is read ``at time $a$ using measurements up to time $b$"; and $w_k \in \mathbb{R}^{n_x}$ is the process noise. 
Further, the measurements $y_k\in\mathbb{R}^{n_y}$ are assumed to be governed by the measurement equation
\begin{align}
y_{k}&= C x_k+e_k, \label{eq:measurementmodel}
\end{align}
where $C \!\in\! \mathbb{R}^{n_y\times n_x}$ is the measurement matrix, and the measurement noise vector $e_k$ is independent of the process noise, and each component of $e_k$ follows an independent univariate skew $t$-distribution
\begin{align} \label{eq:measmod_independent}
[e_k]_i&\stackrel{\text{independent}}{\sim} \ST(0,R_{ii},\Delta_{ii},\nu_i),
\end{align}
where the operator $[\cdot]_i$ gives the $i$th entry of the argument vector, and $[\cdot]_{ij}$ gives the $(i,j)$ entry of its argument matrix. The model parameters can also be time-varying, but for the sake of lighter notation the $k$ subscripts on $A$, $Q$, $C$, $R$, $\Delta$, and $\nu$ are omitted. 
The univariate skew $t$-distribution $\ST(\mu,\sigma^2,\delta,\nu)$ is parametrized by its location parameter $\mu\in\mathbb{R}$, spread parameter $\sigma\in\mathbb{R}_+$, shape parameter $\delta \in\mathbb{R}$ and degrees of freedom $\nu\in\mathbb{R}_+$, and has the PDF
\begin{align}
\ST(z; \mu, \sigma^2,\delta,\nu)=2 \St(z;\mu,\sigma^2+\delta^2,\nu)\T(\widetilde{z};0,1,\nu+1)\label{eq:skewt},
\end{align}
where
\begin{align}
\St(z;\mu,\sigma^2,\nu)=\frac{\Gamma\left(\frac{\nu+1}{2}\right)}{\sigma\sqrt{\nu\pi}\Gamma\left(\frac{\nu}{2}\right)}\left(1+\frac{(z-\mu)^2}{\nu \sigma^2}\right)^{-\frac{\nu+1}{2}}
\end{align}
is the PDF of Student's $t$-distribution, $\Gamma(\cdot)$ is the gamma function, and 
$\widetilde{z} \!=\! \frac{(z-\mu)\delta}{\sigma} \left({\frac{\nu+1}{\nu(\sigma^2+\delta^2)+(z-\mu)^2}}\right)^{\frac{1}{2}}$.
Also, $\T(\cdot;0,1,\nu)$ denotes the cumulative distribution function (CDF) of Student's $t$-distribution with degrees of freedom $\nu$. Expressions for the first two moments of the univariate skew $t$-distribution can be found in \cite{lee2016}.

The model \eqref{eq:measmod_independent} with independent univariate skew-$t$-distributed measurement noise components is justified when one-dimensional noises of different sensors can be assumed to be statistically independent~\cite{nurminen2015a}. Extension and comparison to multivariate skew-$t$-distributed noise will be discussed in Section~\ref{sec:extension}.

The independent univariate skew-$t$ noise model~\eqref{eq:measmod_independent} induces the hierarchical representation of the measurement likelihood
\begin{subequations}
\label{eq:hierarchical}
\begin{align}
y_k|x_k,u_k,\Lambda_k &\thicksim \N(Cx_k+\Delta u_k,\Lambda_k^{-1}R),\\
u_k|\Lambda_k &\thicksim \N_+(0,\Lambda_k^{-1}),\\
[\Lambda_k]_{ii} &\thicksim \G(\tfrac{\nu_i}{2},\tfrac{\nu_i}{2}), \label{eq:hierarchical_lambda}
\end{align}
\end{subequations}
where
 $R\in\mathbb{R}^{n_y\times n_y}$ is a diagonal matrix whose diagonal elements' square roots $\sqrt{R_{ii}}$ are the spread parameters of the skew $t$-distribution in
 \eqref{eq:measmod_independent};
 $\Delta\in\mathbb{R}^{n_y\times n_y}$ is a diagonal matrix whose diagonal elements $\Delta_{ii}$ are the shape parameters;
 $\nu\in\mathbb{R}_+^{n_y}$ is a vector whose elements $\nu_i$ are the degrees of freedom.
 $\Lambda_k$ is a diagonal matrix with a priori independent random diagonal elements $[\Lambda_k]_{ii}$. Also, $\N_+(\mu,\Sigma)$ is the TMND with closed positive orthant as support, location parameter $\mu$, and squared-scale matrix $\Sigma$. Furthermore, $\G(\alpha, \beta)$ is the gamma distribution with shape parameter $\alpha$ and rate parameter $\beta$.

Bayesian smoothing means finding the smoothing posterior $\pi(x_{1:K},u_{1:K},\Lambda_{1:K})\!\triangleq\!p(x_{1:K},u_{1:K},\Lambda_{1:K}|y_{1:K})$. In \cite{nurminen2015a}, the smoothing posterior is approximated by a factorized distribution of the form $q_{\text{\cite{nurminen2015a}}}(x_{1:K},u_{1:K},\Lambda_{1:K}) \!\triangleq\! q_x(x_{1:K})\,q_u(u_{1:K})\,q_\Lambda(\Lambda_{1:K})$. Subsequently, the approximate posterior distributions are computed using the VB approach. The VB approach minimizes the Kullback--Leibler divergence (KLD) $D_{\text{KL}}(q||p)\!\triangleq\!\int q(x)\log\frac{q(x)}{p(x)} \mathrm{d} x$~\cite{CoverT2006} of the true posterior from the factorized approximation. 
That is, $D_{\text{KL}}(q_{\text{\cite{nurminen2015a}}}||\pi)$ is minimized in \cite{nurminen2015a}. An approximate Bayesian filter update, i.e.\ an approximation the filtering posterior $p(x_k,u_k,\Lambda_k|y_{1:k})$ given a normal filtering prior for $x_k$, is then a smoother update with $K\!=\!1$.

The numerical simulations in \cite{nurminen2015a} manifest covariance matrix underestimation, which is a known weakness of the VB approach \cite[Chapter 10]{Bishop2007}. One of the contributions of this paper is to reduce the covariance underestimation of the filter and smoother proposed in \cite{nurminen2015a} by removing independence approximations from the VB factorization. The proposed filter and smoother are presented in Section~\ref{sec:solution}.


\section{Proposed Filter and Smoother} \label{sec:solution}
\subsection{VB factorization}
Using Bayes' theorem, the state evolution model~\eqref{eq:state-evo}, and the likelihood \eqref{eq:hierarchical},  the joint smoothing posterior PDF is
 \begin{align}
&\pi(x_{1:K},u_{1:K},\Lambda_{1:K}) \nonumber\\
&\propto \N(x_1;x_{1|0},P_{1|0}) \prod_{l=1}^{K-1}\N(x_{l+1};Ax_l,Q) \nonumber\\
&\hspace{2mm}\times\prod_{k=1}^K  \N(y_k;Cx_k+\Delta u_k,\Lambda_k^{-1}R)\,\N_+(u_k; 0,\Lambda_k^{-1})\nonumber\\
&\hspace{2mm}\times\prod_{k=1}^K\prod_{i=1}^{n_y}\G\left([\Lambda_k]_{ii};\frac{\nu_i}{2},\frac{\nu_i}{2}\right).
\end{align}
The posterior is not analytically tractable. We propose to seek an approximation in the form
\begin{align}
\label{eq:factors}
\pi(x_{1:K},&u_{1:K},\Lambda_{1:K})\approx \hat{q}_{xu}(x_{1:K},u_{1:K})\,\hat{q}_\Lambda(\Lambda_{1:K}) ,
\end{align}
where the factors in \eqref{eq:factors} are specified by
\begin{align}
&\hat{q}_{xu},\hat{q}_{\Lambda}=\argmin_{{q}_{xu},{q}_{\Lambda}}D_{\text{KL}}(q_\text{N}\,||\,\pi),
\end{align}
where $q_\text{N}(x_{1:K},u_{1:K},\Lambda_{1:K})\!\triangleq\! q_{xu}(x_{1:K},u_{1:K})\,q_\Lambda(\Lambda_{1:K})$ is the factorized approximation. Hence, $x_{1:K}$ and $u_{1:K}$ are not approximated as independent as in \cite{nurminen2015a} because they can be highly correlated \textit{a posteriori} \cite{nurminen2015a}. The analytical solutions for $\hat{q}_{xu}$ and $\hat{q}_\Lambda$ are obtained by cyclic iteration of
\begin{subequations}
\label{eqn:IterativeOptimization}
\begin{align}
&\log {q}_{xu}(x_{1:K},u_{1:K}) \!\leftarrow\! \E_{{q}_{\Lambda}}[\log p(y_{1:K},x_{1:K},u_{1:K},\Lambda_{1:K})]\!+\!c_{xu}\label{eqn:IterativeOptimizationxu}\\
&\log {q}_{\Lambda}(\Lambda_{1:K}) \!\leftarrow\! \E_{{q}_{xu}}[\log p(y_{1:K},x_{1:K},u_{1:K},\Lambda_{1:K})]\!+\!c_{\Lambda}\label{eqn:IterativeOptimizationL}
\end{align}
\end{subequations}\normalsize
where $\leftarrow$ is the assignment or reassignment operator, and the expected values on the right hand sides are taken with respect to the current $q_{xu}$ and $q_\Lambda$~\cite[Chapter 10]{Bishop2007}\cite{TzikasLG2008,Beal03}. Also, $c_{xu}$ and $c_\Lambda$  are constants with respect to the variables $(x_{1:K},u_{1:K})$ and $\Lambda_{1:K}$, respectively.

The detailed derivation of the proposed smoother is given in Appendix~\ref{sec:smoother}. The distribution $q_{xu}(x_{1:K},u_{1:K})$ is a $K\!\times\!(n_x\!+\!n_y)$-dimensional TMND, where the underlying normal distribution can be obtained using the Rauch--Tung--Striebel smoother (RTSS) \cite{RTS-1965}. However, the first two moments of each $x_k$-marginal are required in the computation of the expectation in~\eqref{eqn:IterativeOptimizationL}, and a TMND's moments cannot be computed in closed form. This renders the smoother impractical, since there is no efficient algorithm for approximating the moments of a large TMND. To obtain a practical smoother algorithm, we replace the RTSS's forward filtering step with the assumed-normal filter where each joint filtering distribution of $x_k$ and $u_k$, each of them being a TMND, is approximated by a normal distribution with the matched mean and covariance matrix. Because each of these filtering distributions is a low-dimensional ($(n_x\!+\!n_y)$-dimensional) TMND, their means and covariance matrices can be approximated efficiently using the computationally light algorithm discussed in Subsection \ref{sec:tmnd_moments}. The result of the assumed-normal filter is then fed into the standard RTSS's backward smoothing step. The obtained skew-$t$ smoother (STS) algorithm is given in Algorithm~\ref{table:smoothing}.

In short, one iteration of the proposed smoother consists of a forward filtering step for the variables $(x_k,u_k)$, of a standard RTSS backward smoothing step for the same variables, and of updating $q_\Lambda(\Lambda_k)$ based on the residuals and covariance matrices of each $q(x_k,u_k)$. The forward filtering step is done with a KF-type algorithm where each filtering distribution is modified with the approximative TMND's moments formula.

An approximative filtering update step can be derived as the smoother for a state-space model with just one time-instant. Because each $q_{xu}(x_k,u_k)$ distribution is again a low-dimensional TMND, the moments of each $q_{xu}(x_k,u_k)$ can be approximated quickly. By approximating the $x_k$-marginal $\int q_{xu}(x_k,u_k) \mathrm{d}u_k$ of the final VB iteration's TMND with a normal distribution, we obtain a recursive filtering algorithm, the skew-$t$ filter (\STRTVBF) of Algorithm~\ref{table:filtering}. While the marginal $\int q_{xu}(x_k,u_k) \mathrm{d}u_k$ is not exactly normal but consists of non-truncated components of a TMND, it is unimodal and has $\mathbb{R}^{n_x}$ as support, so the normal distribution with the matching first and second moments is a standard approximation. This normality approximation does not affect the convergence of the filtering VB iterations, but there is no convergence proof for the VB iterations when the moments of the TMND are approximated. However, the approximative VB iterations show better accuracy and convergence speed in the numerical simulations presented in Sections \ref{sec:simulations} than the VB iterations with the factorization $q_{[20]}$.

In short, one VB iteration in the proposed filter's measurement update step consists of updating $q(x_k,u_k)$ with a KF update, modifying its joint mean and covariance matrix with the approximative TMND's moments formulas, and finally updating $q_\Lambda(\Lambda_k)$ based on the residual and covariance matrix of $q(x_k,u_k)$.

We propose three stopping criteria for the VB iterations of the filter and smoother: small enough change in the estimate, small enough increase in the variational lower bound (practical only for the filter), and a fixed number of iterations. The computation of the variational lower bound is explained in Subsection \ref{sec:var_lower_bound}. In our tests we fix the number of VB iterations to five, because we found that the estimation accuracy does not improve after five iterations. Fixing the number of VB iterations is the most practical option in terms of predictability of the computation times, but the required number of iterations has to be verified for each model specifically.

\begin{algorithm}[t]
\caption{Smoothing for skew-$t$ measurement noise}\label{table:smoothing}
\small
\newcommand{\mlambda}[1]{\Lambda_{#1|K}}
\begin{algorithmic}[1]
\State \textbf{Inputs:} $A$, $C$, $Q$, $R$, $\Delta$,  $\nu$, $x_{1|0}$,  $P_{1|0}$, $y_{1:K}$, \texttt{APPROX\_TMND}
\State $\mlambda{k} \gets I_{n_y}$ for $k=1\cdots K$, $A_z \gets  \left[\begin{smallmatrix}A&0\\0&0\end{smallmatrix}\right]$, $C_z \gets \left[\begin{smallmatrix}C&\Delta\end{smallmatrix}\right]$
\Repeat
\Statex \hspace{2mm}\textit{update $q_{xu}(x_{1:K},u_{1:K})$}
	\For{$k$ = 1 to $K$}
		\State $Z_{k|k-1} \gets \mathrm{blockdiagonal}(P_{k|k-1}, \mlambda{k}^{-1})$
		\State $K_{z} \gets Z_{k|k-1} C_z^\t (CP_{k|k-1}C^\t\!+\!\Delta\mlambda{k}^{-1}\Delta^\t\!+\!\mlambda{k}^{-1}R)^{-1}$
		\State $\widetilde{z}_{k|k} \gets \left[\begin{smallmatrix} x_{k|k-1}\\0 \end{smallmatrix}\right]+K_z(y_k-Cx_{k|k-1})  $
		\State $\widetilde{Z}_{k|k}\gets (I-K_z C_z)P_{k|k-1}$
		\State $[z_{k|k}, \!Z_{k|k}] \!\!\gets\!\! \texttt{APPROX\_TMND}(\widetilde{z}_{k|k},\!\widetilde{Z}_{k|k},$ \!\tiny$\{n_x\!+\!1 \!\cdots\! n_x\!+\!n_y\}$ \!\!\!\small$)$
		\Statex \vspace{-1.15ex}
		\State $x_{k|k} \gets [z_{k|k}]_{1:n_x},\ P_{k|k} \gets [Z_{k|k}]_{1:n_x,1:n_x}$
    		\Statex \vspace{-1.1ex}
	 	\State $x_{k+1|k} \gets Ax_{k|k}$
	  	\State $P_{k+1|k} \gets AP_{k|k}A^\t+Q$	
	\EndFor
	\For{$k$ = $K-1$ down to $1$ }
		\State $G_k\gets Z_{k|k} A_z Z_{k+1|k}^{-1}$
		\State $z_{k|K}\gets z_{k|k}+G_k(z_{k+1|K}-A_z z_{k|k})$
		\State $Z_{k|K}\gets Z_{k|k}+G_k(Z_{k+1|K}-Z_{k+1|k})G_k^\t$
		\Statex \vspace{-1.1ex}
		\State $x_{k|K} \gets [z_{k|K}]_{1:n_x},\ P_{k|K} \gets [Z_{k|K}]_{1:n_x,1:n_x}$
		\State $u_{k|K} \!\!\!\gets\!\!\! [z_{k|K}]_{n_x+(1:n_y)}, U_{k|K} \!\!\!\gets\!\!\! [Z_{k|K}]_{n_x+(1:n_y),n_x+(1:n_y)}$
	\EndFor
	\vspace{0.8mm}
	\Statex \hspace{2mm}\textit{update $q_\Lambda(\Lambda_{1:K})$}
	\For{$k$ = $1$ to $K$}
		\State $\Psi \gets (y_k-C_z z_{k|K})(y_k-C_z z_{k|K})^\t R^{-1}$
		\Statex $\hspace{1.5cm}  + C_z Z_{k|K} C_z^\t R^{-1} +u_{k|K}u_{k|K}^\t + U_{k|K}$
		\Statex \vspace{-1.1ex}
		\State \textbf{for} $i$ = 1 to $n_y$ \textbf{do} $[\mlambda{k}]_{ii} \gets\frac{\nu_i+2}{\nu_i+\Psi_{ii}}$ \textbf{end for} \label{listing:lambda_smoothing}
	\EndFor	
\Until{\textbf{converged}}
\State \textbf{Outputs: $x_{k|K}$ and  $P_{k|K}$ for $k=1\cdots K$ } 
\end{algorithmic}
\end{algorithm}
\begin{algorithm}[t]
\caption{Filtering for skew-$t$ measurement noise}\label{table:filtering}
\small
\newcommand{\mlambda}[1]{\Lambda_{#1|k}}
\begin{algorithmic}[1]
\State \textbf{Inputs:} $A$, $C$, $Q$, $R$, $\Delta$,  $\nu$, $x_{1|0}$,  $P_{1|0}$, $y_{1:K}$, \texttt{APPROX\_TMND}
\State $\Lambda \gets \eye_{n_y}$,\quad$C_z \gets \left[\begin{smallmatrix}C&\Delta\end{smallmatrix}\right]$
\For{$k$ = 1 to $K$}
	\State $[a_{k|k}]_i \gets \tfrac{\nu_i+2}{2}$, $[b_{k|k}]_i \gets \tfrac{\nu_i+2}{2}$ for $i=1,\cdots,n_y$
	\Repeat
		\State $[\Lambda_{k|k}]_{ii} \gets \tfrac{[a_{k|k}]_i}{[b_{k|k}]_i}$ for $i=1,\cdots,n_y$
		\vspace{0.8mm}
		\Statex \hspace{6mm}\textit{update $q_{xu}(x_k,u_k)$}
\State $Z_{k|k-1} \gets \mathrm{blockdiagonal}(P_{k|k-1}, \mlambda{k}^{-1})$
		\State $K_{z} \gets Z_{k|k-1} C_z^\t (CP_{k|k-1}C^\t\!+\!\Delta\mlambda{k}^{-1}\Delta^\t\!+\!\mlambda{k}^{-1}R)^{-1}$
		\State $\widetilde{z}_{k|k} \gets \left[\begin{smallmatrix} x_{k|k-1}\\0 \end{smallmatrix}\right]+K_z(y_k-Cx_{k|k-1})  $
		\State $\widetilde{Z}_{k|k}\gets (I-K_z C_z)P_{k|k-1}$
		\State $[z_{k|k}, \!Z_{k|k}] \!\!\gets\!\! \texttt{APPROX\_TMND}(\widetilde{z}_{k|k},\!\widetilde{Z}_{k|k},$ \!\tiny$\{n_x\!+\!1 \!\cdots\! n_x\!+\!n_y\}$ \!\!\!\small$)$
		\Statex \vspace{-1.15ex}
		\State $x_{k|k} \gets [z_{k|k}]_{1:n_x},\ P_{k|k} \gets [Z_{k|k}]_{1:n_x,1:n_x}$
		\State $u_{k|k} \!\!\gets\!\! [z_{k|k}]_{n_x+(1:n_y)}, U_{k|k} \!\!\gets\!\! [Z_{k|k}]_{n_x+(1:n_y),n_x+(1:n_y)}$
		\vspace{0.8mm}
		\Statex \hspace{6mm}\textit{update $q_\Lambda(\Lambda_k)$}
		\State $\Psi \gets (y_k-C_z z_{k|k})(y_k-C_z z_{k|k})^\t R^{-1}$
		\Statex $\hspace{1.5cm}  +C_z Z_{k|k} C_z^\t R^{-1}+u_{k|k}u_{k|k}^\t + U_{k|k}$
		\Statex \vspace{-1.1ex}
		\State \textbf{for} $i$ = 1 to $n_y$ \textbf{do} $[b_{k|k}]_{i} \gets\frac{\nu_i+\Psi_{ii}}{2}$ \textbf{end for} \label{listing:lambda_filtering}
		\Until{\textbf{converged}}
	\State $x_{k+1|k} \gets Ax_{k|k}$
	\State $P_{k+1|k} \gets AP_{k|k}A^\t+Q$	
\EndFor
\State \textbf{Outputs: $x_{k|k}$ and  $P_{k|k}$ for $k=1\cdots K$ } 
\end{algorithmic}
\end{algorithm}

\subsection{TMND's moments} \label{sec:tmnd_moments}

The mean and covariance matrix of a TMND can be computed using the formulas presented in \cite{tallis1961}. They require evaluating the CDFs of general multivariate normal distributions. The \textsc{Matlab} function \verb+mvncdf+ implements the numerical quadrature  of \cite{genz2004} in 2 and 3 dimensional cases and the quasi-Monte Carlo method of \cite{Genz02} for the dimensionalities 4--25. However, these methods can be prohibitively slow. Therefore, we approximate the TMND's moments using a fast sequential algorithm that is based on the expectation propagation (EP) algorithm \cite{minka2001}. An EP algorithm for computing the mean, covariance matrix, and the truncated probability of a TMND is derived in \cite{cunningham2013_arxiv}. The method is initialized with the original normal density whose parameters are then updated by applying one linear constraint at a time. For each constraint, the mean and covariance matrix of the once-truncated normal distribution are computed analytically, and the once-truncated distribution is approximated by a non-truncated normal with the updated moments. The EP is an iterative algorithm, so each truncation can be re-made when, roughly speaking, the effect of the previous iteration of the considered truncation is removed from the normal distribution's moments. One iteration of this method is illustrated in Fig.~\ref{fig:truncation}, where a bivariate normal distribution truncated into the positive quadrant is approximated with a non-truncated normal distribution.
\begin{figure*}
\centering
\subfloat[]{\includegraphics[width=0.15\textwidth]{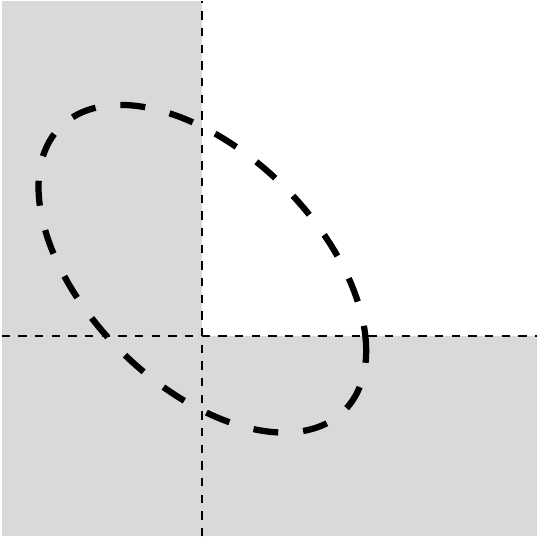}}\ \label{fig:re?ctrunc1}
\subfloat[]{\includegraphics[width=0.15\textwidth]{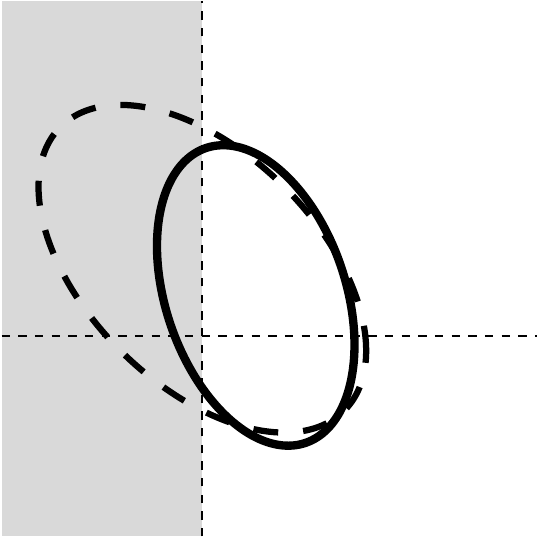}}\ \label{fig:rectrunc2}
\subfloat[]{\includegraphics[width=0.15\textwidth]{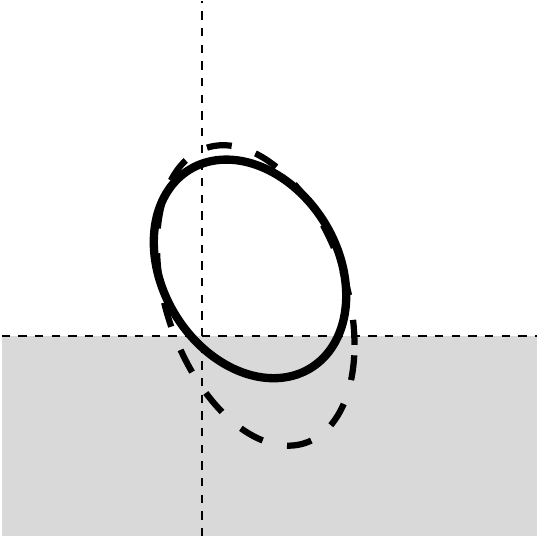}}\ \label{fig:rectrunc3}
\subfloat[]{\includegraphics[width=0.15\textwidth]{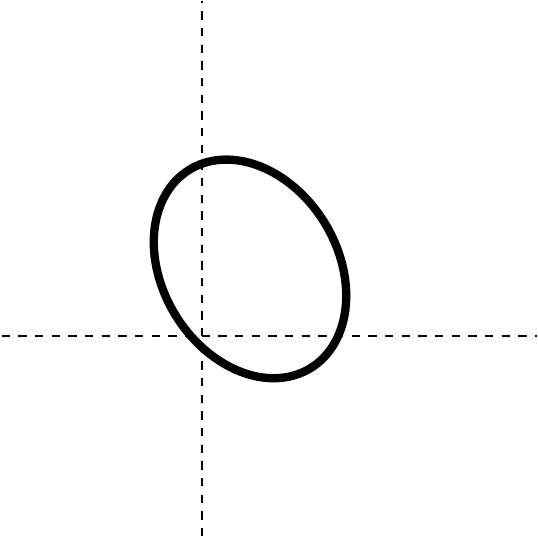}} \label{fig:rectrunc4} 
\includegraphics[width=0.22\textwidth,trim=0mm -78mm 0mm 0mm,clip]{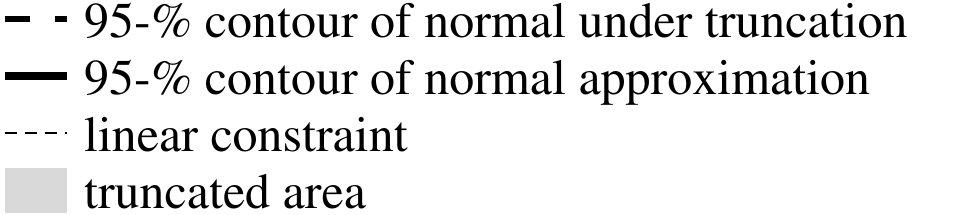}
\hfill
\caption{An iteration of the EP algorithm for approximating a truncated normal distribution with a normal distribution: (a) the original normal distribution's contour ellipse that contains 95\,\% of the probability, and the truncated area in gray, (b) the first applied truncation in gray, and the 95-\% contour of the resulting normal approximation, (c) the second applied truncation in gray, and the 95-\% contour of the normal approximation, (d) the final normal approximation.}
\label{fig:truncation}
\end{figure*}

The result of the EP algorithm depends on the order in which the constraints are applied. Finding the optimal order of applying the truncations is a problem that has combinatorial complexity. Hence, we adopt a greedy approach, whereby the constraint to be applied is chosen from among the remaining constraints so that the resulting once-truncated normal distribution is closest to the true TMND. By Lemma~\ref{lem:optimal-seq}, the optimal constraint to select is the one that truncates the most probability. The optimality is with respect to a KLD as the measure. For example, in Fig.~\ref{fig:truncation} the vertical constraint truncates more probability, so it is applied first.
\begin{lemma} \label{lem:optimal-seq}
Let $p(\z)$ be a {TMND} with the support $\{\z \geq 0\}$ and $q(\z)=\N(\z;\mu,\Sigma)$. Then,
\begin{align}
\argmin_i D_{\normalfont{\text{KL}}}\!\left(p(\z)\,\big| \big|\,\tfrac{1}{c_i}q(\z)\iverson{\z_i \geq 0}\right) = \argmin_i \tfrac{\mu_i}{\sqrt{\Sigma_{ii}}} ,
\end{align} 
where $\mu_i$ is the $i$th element of $\mu$, $\Sigma_{ii}$ is the $i$th diagonal element of $\Sigma$, $\iverson{\cdot}$ is the Iverson bracket, and $c_i\!=\! \int{q(\z)\iverson{\z_i \geq 0}}\d\z$.
\begin{align}
\text{Proof: }& D_{\normalfont{\text{KL}}}\!\left(p(\z)\,\big| \big|\,\tfrac{1}{c_i}q(\z)\iverson{\z_i \geq 0}\right)\nonumber\\
&\pluseq -\int{p(\z)\log(\tfrac{1}{c_i}q(\z)\iverson{\z_i \geq 0})\d\z}\\
&= \log{c_i} - \int{p(\z)\log q(\z)\d\z} \pluseq \log{c_i} ,
\end{align} 
where $\pluseq$ means equality up to an additive constant. 
Since  $c_i$ is an increasing function of $\frac{\mu_i}{\sqrt{\Sigma_{ii}}}$, the proof follows. \vspace{-2ex} \begin{flushright}$\blacksquare$\end{flushright}
\end{lemma}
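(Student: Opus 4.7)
The plan is to expand the Kullback--Leibler divergence directly from its definition, show that three of the four resulting terms are constant in $i$, and reduce the remaining term to a monotone function of $\mu_i/\sqrt{\Sigma_{ii}}$.

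First I would write
\begin{equation*}
D_{\text{KL}}\!\left(p(\z) \,\|\, \tfrac{1}{c_i} q(\z) \iverson{\z_i \geq 0}\right) = -H(p) - \int p(\z) \log q(\z) \d\z + \log c_i - \int p(\z) \log \iverson{\z_i \geq 0} \d\z,
\end{equation*}
where $H(p)$ denotes the differential entropy of $p$. The first two terms are visibly independent of $i$. The fourth term vanishes because $p$ is supported on $\{\z \geq 0\}$, so every coordinate satisfies $\z_i \geq 0$ on the support of $p$, which makes $\iverson{\z_i \geq 0} = 1$ there and $\log 1 = 0$. Hence minimising the KL divergence over $i$ reduces to minimising $\log c_i$, or equivalently $c_i$.

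Second, I would identify $c_i$ explicitly. By construction $c_i = \Pr_{\z \sim q}(\z_i \geq 0)$, and since the marginal of $\z_i$ under $q = \mathcal{N}(\mu, \Sigma)$ is $\mathcal{N}(\mu_i, \Sigma_{ii})$, this probability equals $\Phi(\mu_i/\sqrt{\Sigma_{ii}})$, where $\Phi$ is the standard normal CDF. Since $\Phi$ is strictly increasing, $\argmin_i c_i = \argmin_i \mu_i/\sqrt{\Sigma_{ii}}$, which is exactly the claim. Qualitatively, the greedy rule picks the half-space that, under the current Gaussian approximation $q$, retains the least probability mass, i.e.\ truncates the most.

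I do not foresee any substantive analytic obstacle; the proof is algebraic bookkeeping. The only step that warrants care is the disposal of $\int p(\z) \log \iverson{\z_i \geq 0} \d\z$, which vanishes precisely because $\mathrm{supp}(p) \subseteq \{\z_i \geq 0\}$. This same containment is what keeps the KL divergence finite to begin with; were one to consider a constraint whose half-space did not contain $\mathrm{supp}(p)$, the divergence would be $+\infty$ and the minimisation among indices $i$ would degenerate, so the lemma implicitly restricts attention to constraints compatible with the support of $p$.
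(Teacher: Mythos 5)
Your proof is correct and follows essentially the same route as the paper's: both reduce the KL divergence, up to additive constants independent of $i$, to $\log c_i$ and then invoke the monotonicity of $c_i$ in $\mu_i/\sqrt{\Sigma_{ii}}$. Your version is in fact slightly more careful, since you explicitly justify discarding the Iverson-bracket term via $\mathrm{supp}(p)\subseteq\{\z_i\geq 0\}$ and make the identification $c_i=\Phi(\mu_i/\sqrt{\Sigma_{ii}})$ explicit rather than merely asserting that $c_i$ is increasing in that ratio.
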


The obtained EP algorithm with the greedy processing sequence for computing the mean and covariance matrix of a given multivariate normal distribution truncated to the positive orthant is given in Algorithm~\ref{table:recursive}. The algorithm can also give the logarithm of the positive orthant's probability $\alpha$, which is required in computing the variational lower bound. In many programming languages a numerically robust method to implement the line \ref{listing:pperc} of the algorithm in Algorithm~\ref{table:recursive} is using the scaled complementary error function $\mathrm{erfcx}$ through
\begin{equation}
\frac{\phi(\xi)}{\Phi(\xi)} = \frac{\sqrt{2/\pi}}{\mathrm{erfcx}(-\xi/\sqrt{2})} .
\end{equation}

Unfortunately, the EP algorithm does not in general admit guaranteed convergence or error bounds. However, Cunningham et al.\ \cite{cunningham2013_arxiv} present an extensive Monte Carlo study on the performance of the EP in approximating the truncated probability of a TMND, showing that this EP algorithm is reliable provided that there are no redundant truncating constraints and that the support of the distribution is hyperrectangular. Cunningham et al.\ also imply that the same result applies to approximating the moments of the TMND. These conditions are fulfilled in our case, as all the truncating hyperplanes are non-redundant and aligned with coordinate axes.
\begin{algorithm}[t]
\caption{Greedy expectation propagation for the moments of normal distribution truncated to positive orthant\newline(function $[\mu,\Sigma,\alpha] \gets \texttt{APPROX\_TMND}(\mu,\Sigma,\mathcal{T}$))}\label{table:recursive}
\small
\begin{algorithmic}[1]
\State \textbf{Inputs:} $\mu$, $\Sigma$, and set of the truncated components' indices $\mathcal{T}$
\State $\widetilde\mu \gets \mu$,\quad$\widetilde\Sigma \gets \Sigma$
\State $\alpha \gets  - \frac{1}{2} \widetilde\mu^\t \widetilde\Sigma^{-1} \widetilde\mu$,\quad$M \gets I_{n_\mu}$
\State $\tau_k\gets0$,\quad$\eta_k\gets0$ for $k=1,2,\ldots,n_\mu$.
\Repeat
\State $\mathcal{T}' \gets \mathcal{T}$
\While{$\mathcal{T}' \neq \emptyset$}
	\State $k \gets \argmin_i\{ \mu_i/\sqrt{\Sigma_{ii}} \mid i \in \mathcal{T}' \}$
	\State $s^2 \gets 1/(1/\Sigma_{kk}-\tau_k)$
	\State $m \gets s^2 (\mu_k/\Sigma_{kk}-\eta_k)$
	\State $\xi \gets m / s$
	\State $\epsilon \gets \phi(\xi) / \Phi(\xi)$ \Comment{$\phi$ is the PDF of $\N(0,1)$, $\Phi$ its CDF}  \label{listing:pperc}
	\State $\overline{m} \gets m + \epsilon s$
	\State $\overline{s}^2 \gets (1-\xi\epsilon-\epsilon^2) s^2$
	\State $\overline\tau_k \gets 1/\overline{s}^2-1/s^2-\tau_k$,\quad$\tau_k\gets \tau_k+\overline\tau_k$
	\State $\overline\eta_k \gets \overline{m}/\overline{s}^2-m/s^2-\eta_k$,\quad$\eta_k\gets \eta_k+\overline\eta_k$
	\State $\mu \gets \mu + \frac{\overline\eta_k-\overline\tau_k \mu_k}{1+\overline\tau_k \Sigma_{kk}} \cdot \Sigma_{:,k}$ \Comment{mean update}
	\State $\Sigma \gets \Sigma - \frac{\overline\tau_k}{1+\overline\tau_k \Sigma_{jj}} \cdot \Sigma_{:,k}\Sigma_{k,:}$ \Comment{covariance update}
	\State $M \gets M + \tau_k L_{k,:}^\t L_{k,:}$ \Comment{$LL^\t\!=\!\widetilde\Sigma$}
	\State $\alpha \gets \alpha + \log\big(\Phi(\xi)\big) + \frac{1}{2}\log(1+\tau_k s^2) + \frac{1}{2} \tau_k \mu_k^2$
	\State \hspace{5mm} $+ \frac{1}{2} \frac{m^2\tau_k - 2m\eta_k - s^2 \eta_k^2}{1+\tau_k s^2}$ \Comment{log-probability update}
	\State $\mathcal{T}' \gets \mathcal{T}' \backslash \{k\}$
\EndWhile
\State \!$\alpha \!\gets\! \alpha \!-\! \frac{1}{2} \log(\det(M)) \!+\! \frac{1}{2} \mu^\t \widetilde\Sigma^{-1} \mu$
\Until{\textbf{converged}}
\State \textbf{Outputs:} moments $\mu$, $\Sigma$, and the logarithm of the positive orthant's probability $\alpha$
\end{algorithmic}
\end{algorithm}

\subsection{Variational lower bound} \label{sec:var_lower_bound}
When the PDF $p(x|y)$ is approximated with the PDF $q(x)$, the variational lower bound is
\begin{align}
\mathcal{L}(q) =& \int q(x) \log\frac{p(y,x)}{q(x)} \mathrm{d}x.
\end{align}
Minimizing the KLD is equivalent to maximizing the variational lower bound \cite[Ch.\ 21]{murphy2012}. Therefore, the variational lower bound can be used as a debugging means and convergence criterion for the VB iterations because the lower bound should increase at each iteration. Furthermore, because the logarithmic marginal likelihood $\log p(y)$ is the sum of the variational lower bound and the KLD, the maximal variational lower bound can be used as an approximation for $\log p(y)$. The model evidence in Bayesian comparison can thus be approximated with $\exp(\mathcal{L}(q))$ \cite[Ch.\ 21.5.1.6]{murphy2012}.

When evaluated immediately after the VB filter update of $q_{xu}(x_k,u_k)$, the variational lower bound for the skew-$t$ filter is
\begin{align} \label{eq:lb}
\mathcal{L}_\text{f}(q)
=&  \log\N\big(y; Cx_{k|k-1}, CP_{k|k-1}C^\t \!+\! \Delta\Lambda_{k|k}^{-1}\Delta^\t \!+\! \Lambda_{k|k}^{-1}R \big) \nonumber\\
 +& \sum_{j=1}^{n_y} \bigg[ [a_{k|k}]_j \bigg( 1 + \log\left( \tfrac{[a_{k|k}]_j-1}{[b_{k|k}]_j} \right) - \tfrac{[a_{k|k}]_j-1}{[b_{k|k}]_j} \bigg) \nonumber\\
 -& \log\left(\tfrac{[a_{k|k}]_j}{[b_{k|k}]_j}\right) \bigg] + n_y \log(2) + \log\alpha_{k|k},
\end{align}
where the notations follow those in Algorithm \ref{table:filtering}, and $\alpha_{k|k}$ is the probability of the positive orthant for the distribution $\N([\widetilde{z}_{k|k}]_{n_x+(1:n_y)},[\widetilde{Z}_{k|k}]_{n_x+(1:n_y),n_x+(1:n_y)})$. The probability $\alpha_{k|k}$ can be computed using the EP algorithm in Algorithm \ref{table:recursive}. The derivation of the lower bound \eqref{eq:lb} is straightforward but tedious and omitted here. Unfortunately, evaluation of the variational lower bound for the smoother is impractical because its expression includes a probability of the positive orthant given a high-dimensional normal distribution.


\section{Simulations} \label{sec:simulations}

Our numerical simulations use satellite navigation pseudorange measurements and the model
\begin{equation} \label{eq:measmodel}
[y_k]_i \!=\! \left\| s_i \!-\! [x_k]_{1:3} \right\| + [x_k]_4 + [e_k]_i , [e_k]_i \stackrel{\text{iid}}{\sim} \mathrm{ST}(0,1\,\text{m},\delta\,\text{m},4)
\end{equation}
where $s_i \!\in\! \mathbb{R}^3$ is the $i$th satellite's position, $[x_k]_4 \!\in\! \mathbb{R}$ is bias with prior $\N(0,(0.75\,\text{m})^2)$, and $\delta \!\in\! \mathbb{R}$ is a parameter. The model is linearized using the first order Taylor polynomial approximation, and the linearization error is negligible because the satellites are far relative to the magnitude of uncertainty in the prior. The satellite constellation of the Global Positioning System (GPS) from the first second of the year 2015 provided by the International GNSS Service \cite{dow2009} is used with 8 visible satellites. The root-mean-square error (RMSE) is computed for the position $[x_k]_{1:3}$ as
\begin{equation}
\text{RMSE} = \sqrt{\frac{1}{K}\sum_{k=1}^K \big\| [x_{k|k}]_{1:3} - [x_k]_{1:3} \big\|^2},
\end{equation}
where $x_{k|k}$ is the filter estimate and $x_k$ is the true state. The computations are made with \textsc{Matlab}.

\subsection{Computation of TMND statistics}

In this subsection we study the computation of the moments of the untruncated components of a TMND. For each Monte Carlo replication, one state value is generated from the prior $x\!\sim\!\N(0,\mathrm{diag}(20^2,20^2,0.22^2,0.1^2)\,\text{m}^2)$, and one measurement vector is generated from the model \eqref{eq:measmodel} with  $\nu\!=\!\infty$ degrees of freedom (corresponding to skew-normal likelihood). 10\,000 Monte Carlo replications are used. The compared methods are expectation propagation (EP) with the greedy truncation order and one, two, three, four, and five  EP iterations (GEP1, GEP2, GEP3, GEP4, GEP5), the variational Bayes (VB), and the analytical formulas of \cite{tallis1961} using \textsc{Matlab} function \verb+mvncdf+ (MVNCDF). VB is an update of the skew $t$ VB filter (STVBF) \cite{nurminen2015a} where the heavy-tailedness variable $\overline{\Lambda}_1$ is fixed to identity $\eye_{n_y}$ and the VB iteration is terminated when the position estimate changes less than 0.005\,m or at the 1000th iteration. The reference solution for the expectation value is an importance sampling (IS) update with 50\,000 samples and the prior as the importance distribution.

Fig.\ \ref{fig:mvncdftest} shows the distributions of the estimates' differences from the IS estimate. The errors are given per cent of the IS's estimation error. The box levels are 5\,\%, 25\,\%, 50\,\%, 75\,\%, and 95\,\% quantiles and the asterisks show minimum and maximum values. The results indicate that the accuracy of the EP approximation of the mean does not improve after two EP iterations. MVNCDF is slightly more accurate than GEP2 in the cases with high skewness, but MVNCDF's computational load is roughly 40\,000 times that of the GEP2. This justifies the use of the EP approximation.

The approximation of the posterior covariance matrix is tested by studying the normalized estimation error squared (NEES) values \cite[Ch.\ 5.4.2]{bar-shalom}
\begin{equation}
\text{NEES}_k = (x_{k|k}-x_k)^\t P_{k|k}^{-1} (x_{k|k}-x_k) ,
\end{equation}
where $x_{k|k}$ and $P_{k|k}$ are the filter's output mean and covariance matrix, and $x_k$ is the true state. The algorithms' $\text{NEES}_1$ values averaged over the Monte Carlo replications are given in Table \ref{tab:nees}. If the covariance matrix is correct, the $\text{NEES}_1$ is $\chi^2$-distributed with 3 degrees of freedom because the position is 3-dimensional, so the nominal expected value is 3 \cite[Ch.\ 5.4.2]{bar-shalom}. VB shows large average $\text{NEES}_1$ values when $\delta$ is large, which indicates that VB underestimates the covariance matrix. Apart from MVNCDF, the GEP algorithms show average $\text{NEES}_1$ values closest to 3, so the EP provides a more accurate covariance matrix approximation than VB. Indicated by average $\text{NEES}_1$ being slightly smaller than 3, GEP1 in fact overestimates the covariance matrix when $\delta$ is large, but this issue is mostly fixed by the second EP iteration.
\begin{figure}[t]
\centering
\includegraphics[width=0.98\columnwidth]{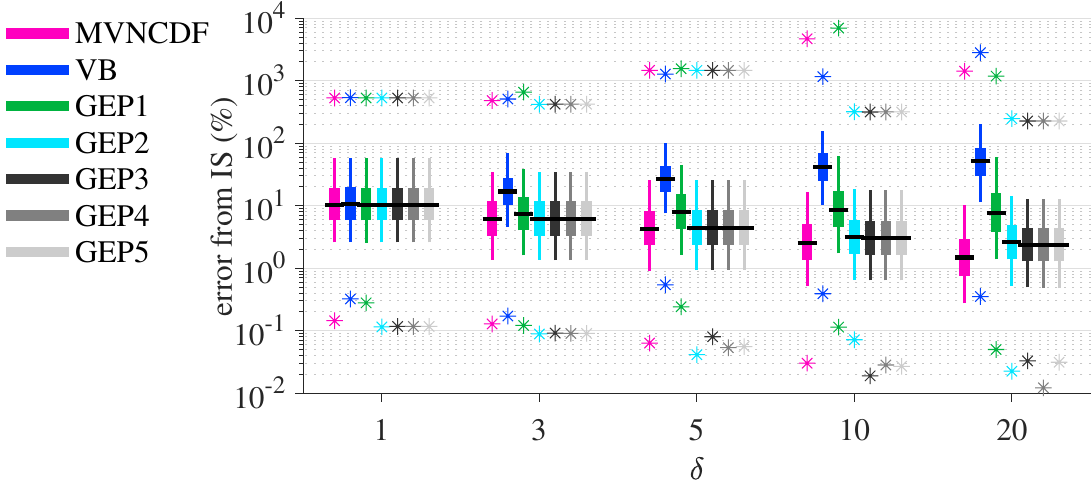} \hfill
\vspace{-3mm}
\caption{Two EP iterations suffice. MVNCDF is slightly more accurate than the proposed GEP but computationally heavy.} \label{fig:mvncdftest}
\end{figure}
\begin{table}[t]
\renewcommand{\arraystretch}{1.3}
\centering
\caption{The average $\text{NEES}_1$ values. GEP1's average $\text{NEES}_1$ is closer to the optimal value 3 than that of VB, so EP gives a more accurate posterior covariance matrix.}
\label{tab:nees}
\vspace{-2mm}
\begin{tabular}{c|cccccccc}
$\delta$ & 1 & 3 & 5 & 10 & 20 \\
\hline
MVNCDF & 3.0 & 3.0 & 3.0 & 3.0 & 3.0 \\
VB & 3.8 & 9.1 & 19.1 & 65.6 & 229.2 \\
GEP1 & 3.0 & 2.9 & 2.8 & 2.7 & 2.7 \\
GEP2 & 3.0 & 3.0 & 3.0 & 2.9 & 2.9 \\
GEP3 & 3.0 & 3.0 & 3.0 & 2.9 & 2.9 \\
GEP4 & 3.0 & 3.0 & 3.0 & 2.9 & 2.9 \\
GEP5 & 3.0 & 3.0 & 3.0 & 2.9 & 2.9
\end{tabular}
\end{table}

The order of the truncations in the EP algorithm affects the performance only when there are clear differences in the amounts of probability mass under each truncation. We compare GEP1 with the EP iteration with a random truncation order (REP1). In REP1 any of the non-optimal constraints is chosen randomly at each truncation. Fig.\ \ref{fig:rtrandtest} presents an example where $\delta\!=\!20$, and the measurement noise realization $e$ has been generated from the skew normal distribution and then modified by
\begin{equation}
e_j = \min\{\min\{e_{1:n_y}\},0\} - c\sqrt{1+20^2} ,
\end{equation}
where $j$ is a random index, and $c$ is a parameter. A large $c$ generates one negative outlier to each measurement vector, which results in one truncation with significantly larger truncated probability mass than the rest of the truncations. Fig.\ \ref{fig:rtrandtest} shows the percentual difference of REP1 error from GEP1 error; i.e.\ a positive difference means that GEP1 is more accurate. The errors here refer to distance from the IS estimate. The figure shows that with large $c$ GEP1 is more accurate than REP1. Thus, the effect of the truncation ordering on the accuracy of the EP approximation is more pronounced when there is one truncation that truncates much more than the rest. This justifies our greedy approach and the result of Lemma~\ref{lem:optimal-seq} for ordering the truncations.
\begin{figure}
\centering
\includegraphics[width=0.8\columnwidth]{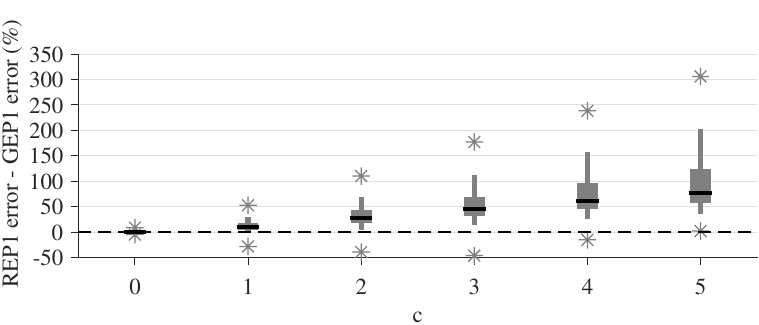}
\vspace{-3mm}
\caption{The proposed GEP1 outperforms REP1 when one negative outlier is added to the measurement noise vector because there is one truncation that truncates much more probability than the rest.}
\label{fig:rtrandtest}
\end{figure}

The skew-$t$ measurement model essentially implies that given the scaling variable $\Lambda$, we are observing the sum $Cx+\Delta u$ plus normally distributed noise. Fig.\ \ref{fig:tmnd_approx_example} compares the EP approximation and the 30-iteration VB approximation of the posterior distribution for the model
\begin{subequations}\label{eq:tmnd_approx_example}
\begin{align}
&p(x,u) = \N(x;0,1) \cdot \N_+(u;0,1)\\
&p(y|x,u) = \N(y; x+\delta u,0.1^2)
\end{align}
\end{subequations}
with the measurement value $y\!=\!1$ and with $\delta$ values 0.1, 0.5, and 1. Fig.\ \ref{fig:tmnd_approx_example} illustrates that when $\delta$ is large, $x$ and $u$ are highly correlated. This makes VB seriously underestimate the covariance matrix, and EP provides a better approximation of the joint posterior and the marginal posterior of $x$.
\begin{figure}
\centering
\begin{minipage}{0.32\columnwidth}
\includegraphics[width=\columnwidth,trim=0mm 0mm 0mm 0mm,clip=true]{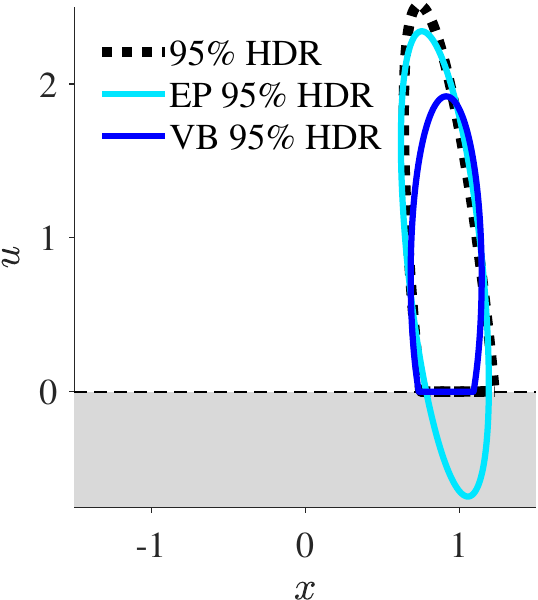}
\subfloat[]{
\includegraphics[width=\columnwidth,trim=2mm 0mm 0mm 0mm,clip=true]{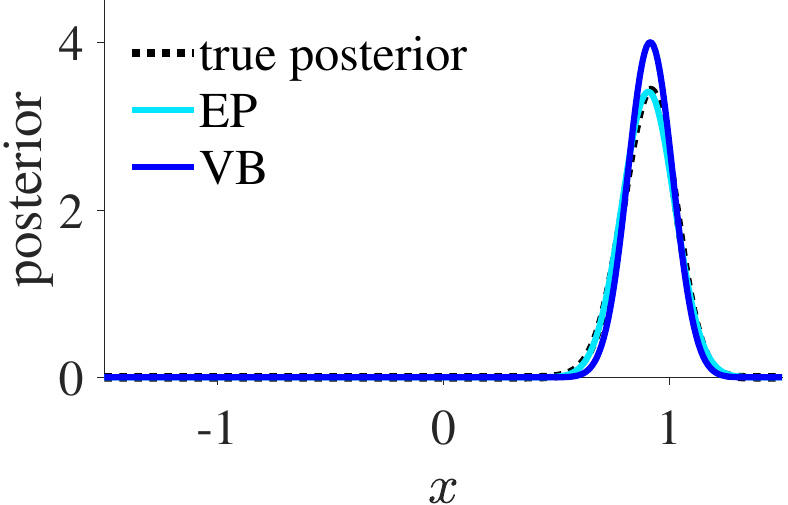}
}
\end{minipage}
\hfill
\begin{minipage}{0.32\columnwidth}
\includegraphics[width=\columnwidth,trim=0mm 0mm 0mm 0mm,clip=true]{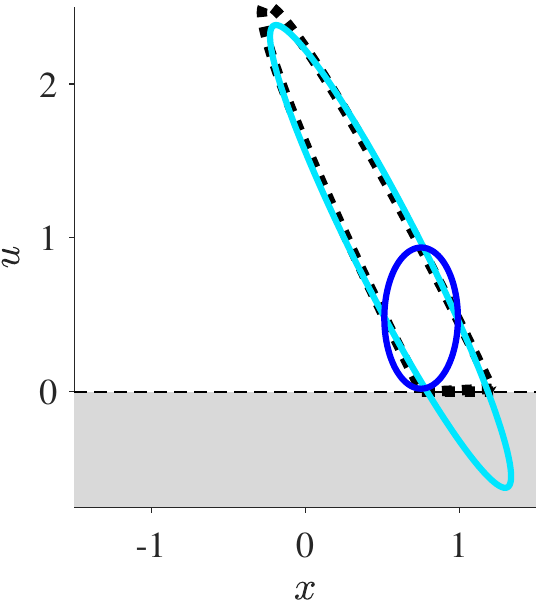}
\subfloat[]{
\includegraphics[width=\columnwidth,trim=2mm 0mm 0mm 0mm,clip=true]{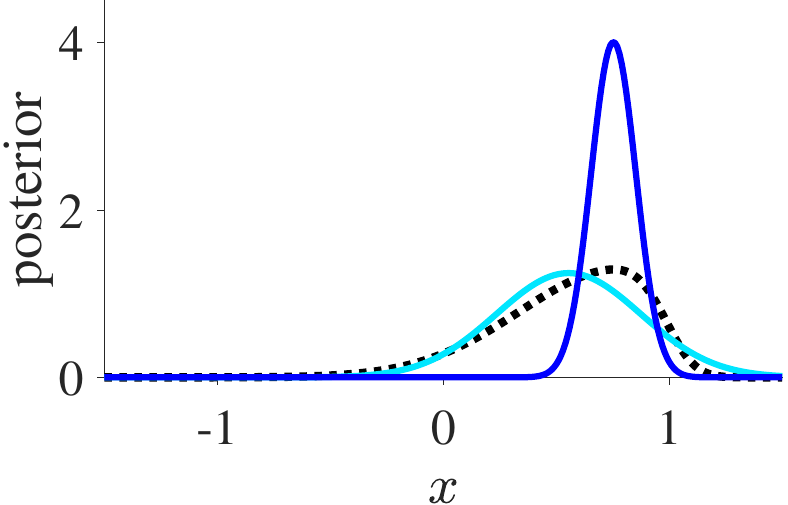}
}
\end{minipage}
\hfill
\begin{minipage}{0.32\columnwidth}
\includegraphics[width=\columnwidth,trim=0mm 0mm 0mm 0mm,clip=true]{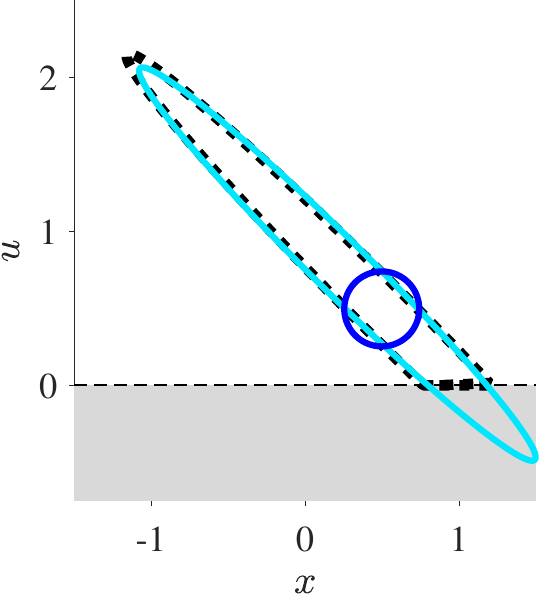}
\subfloat[]{
\includegraphics[width=\columnwidth,trim=2mm 0mm 0mm 0mm,clip=true]{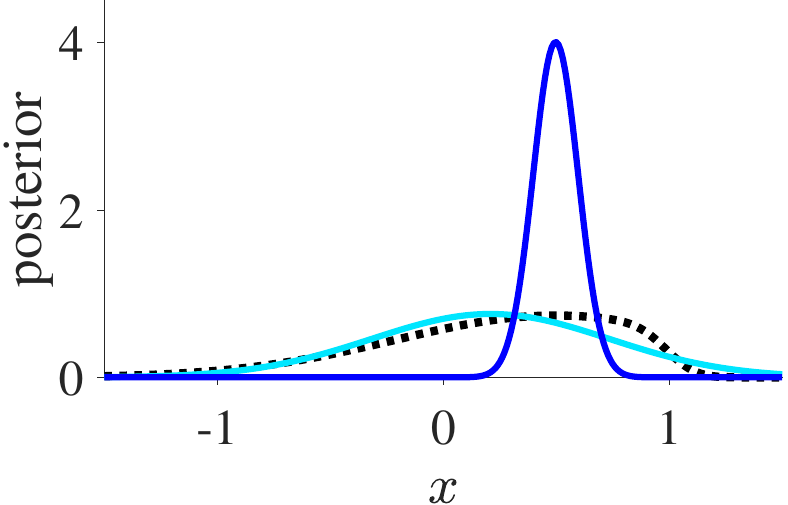}
}
\end{minipage}
\vspace{-1mm}
\caption{EP gives a better approximation than VB for a bivariate normal distribution of $(x,u)$, where $u$ is truncated to be positive. The figures show the 95\,\% high-density regions (HDR) of the posteriors $p(x,u|y\!=\!1)$ (upper row) and the marginal posteriors $p(x|y\!=\!1)$ (lower row) of the model \eqref{eq:tmnd_approx_example}. (a) $\delta\!=\!0.1$, (b) $\delta\!=\!0.5$, (c) $\delta\!=\!1$.}\label{fig:tmnd_approx_example}
\end{figure}

\subsection{Skew-$t$ inference}

In this section, the proposed skew-$t$ filter (\STRTVBF) is compared with state-of-the-art filters using numerical simulations of a 100-step trajectory. The tested \STRTVBF uses two EP iterations. The measurement model is given in \eqref{eq:measmodel}, and the state evolution model is a random walk with process noise covariance $Q\! =\! \mathrm{diag}(q^2, q^2, 0.2^2, 0)\,\text{m}$, where $q$ is a parameter. The compared methods are a bootstrap-type PF, STVBF \cite{nurminen2015a}, $t$ variational Bayes filter (TVBF) \cite{piche2012}, and Kalman filter (KF) with measurement validation gating \cite[Ch.\ 5.7.2]{bar-shalom} that discards the measurement components whose normalized innovation squared is larger than the $\chi_1^2$-distribution's 99\,\% quantile. The used KF parameters are the mean and variance of the used skew $t$-distribution, and the TVBF parameters are obtained by matching the degrees of freedom with that of the skew $t$-distribution and computing the maximum likelihood location and scale parameters for a set of pseudo-random numbers generated from the skew $t$-distribution. The results are based on 10\,000 Monte Carlo replications.

Fig.\ \ref{fig:time_vs_acc} illustrates the filter iterations' convergence when the measurement noise components $[e_k]_i$ in \eqref{eq:measmodel} are generated independently from the univariate skew $t$-distribution. The figure shows that the proposed \STRTVBF's median RMSE does not improve after five VB iterations, and \STRTVBF outperforms the other filters in RMSE already with two VB iterations, except for PF that is the minimum-RMSE solution. Furthermore, Fig.\ \ref{fig:time_vs_acc} shows that \STRTVBF's converged state is close to the PF's converged state in RMSE, and PF can require as many as 10\,000 particles to outperform \STRTVBF. In our implementation, the PF with 10\,000 particles is computationally roughly 15 times heavier that the \STRTVBF with five VB iterations. \STRTVBF also converges faster than STVBF when the process noise variance parameter $q$ is large.

Fig.\ \ref{fig:rmse_skewt} shows the distributions of the RMSE differences from the \STRTVBF's RMSE as percentages of the \STRTVBF's RMSE. STF1 is the skew-$t$ filter with just one EP iteration per a VB iteration. STF, STF1, and TVBF use five VB iterations, and STVBF uses 30 VB iterations. STF clearly has the smallest RMSE when $\delta\!\geq\!3$, i.e.\ when the skewness is high. STF1 and STF (with 2 EP iterations) have similar accuracies, so one EP iteration may be sufficient in practice. Unlike STVBF, the new \STRTVBF improves accuracy even with small $q$ and large $\delta$, which can be explained by the improved covariance matrix approximation.

The proposed smoother is also tested with the measurement model \eqref{eq:measmodel} and the random-walk state model. The compared smoothers are the proposed skew-$t$ smoother with two EP iterations (\STRTVBS), skew-$t$ variational Bayes smoother (STVBS) \cite{nurminen2015a}, $t$ variational Bayes smoother (TVBS) \cite{piche2012}, and the RTSS with 99\,\% measurement validation gating \cite{RTS-1965}. Fig.\ \ref{fig:time_vs_acc_smoother} shows that \STRTVBS has lower RMSE than the smoothers based on symmetric distributions. Furthermore, \STRTVBS's VB iteration converges in five iterations or less, so it is faster than STVBS.
\begin{figure}[t]
\centering
\includegraphics[width=0.75\columnwidth]{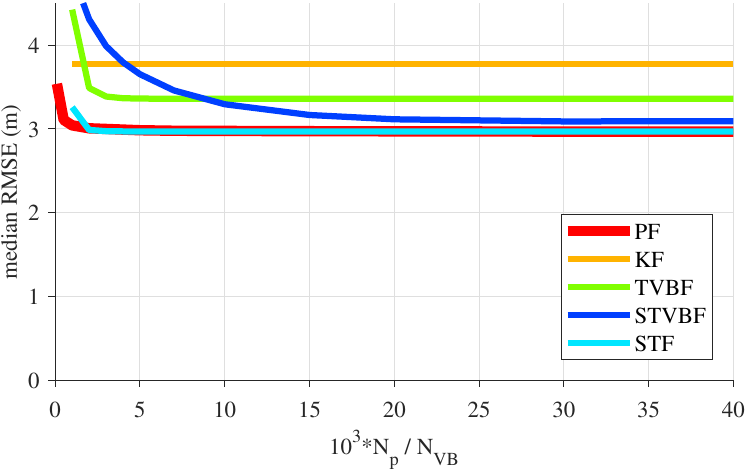}
\vspace{-3mm}
\caption{The proposed \STRTVBF's median RMSE does not improve after $N_\text{VB}\!=\!5$ VB iterations per time instant. The required number of PF particles $N_\text{p}$ can be more than 10\,000, and STVBF \cite{nurminen2015a} can require 30 VB iterations. The $x$-axis is $10^3\!\cdot\! N_\text{p}$ for PF and $N_\text{VB}$ for the rest of the filters. $q\!=\!5,\ \delta\!=\!5$.}
\label{fig:time_vs_acc}
\end{figure}
\begin{figure}[t]
\centering
\includegraphics[clip,width=0.48\columnwidth]{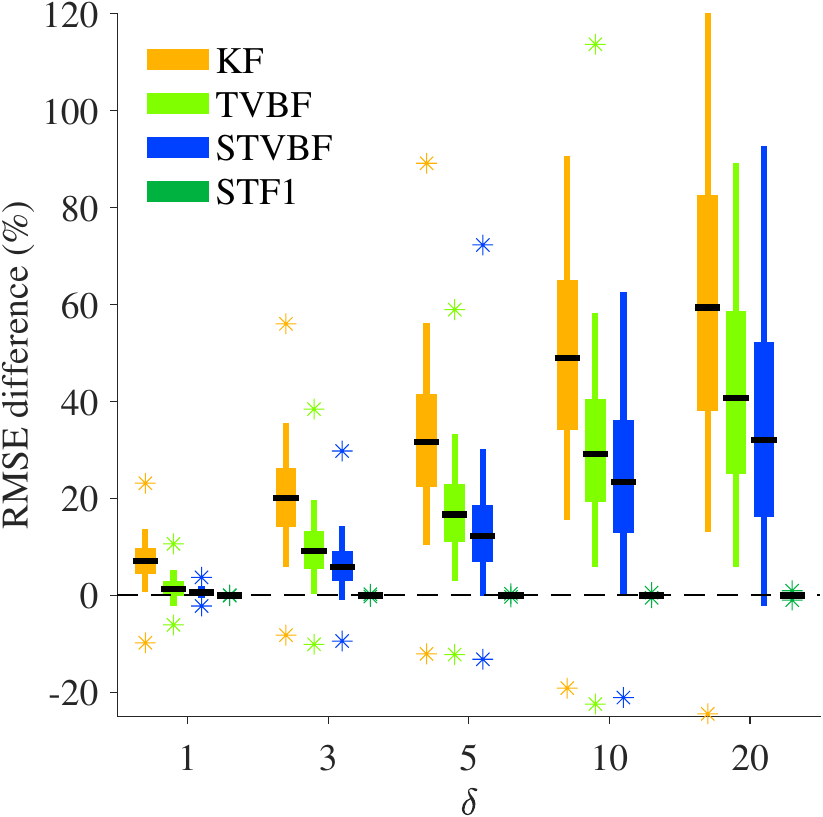}
\includegraphics[width=0.48\columnwidth]{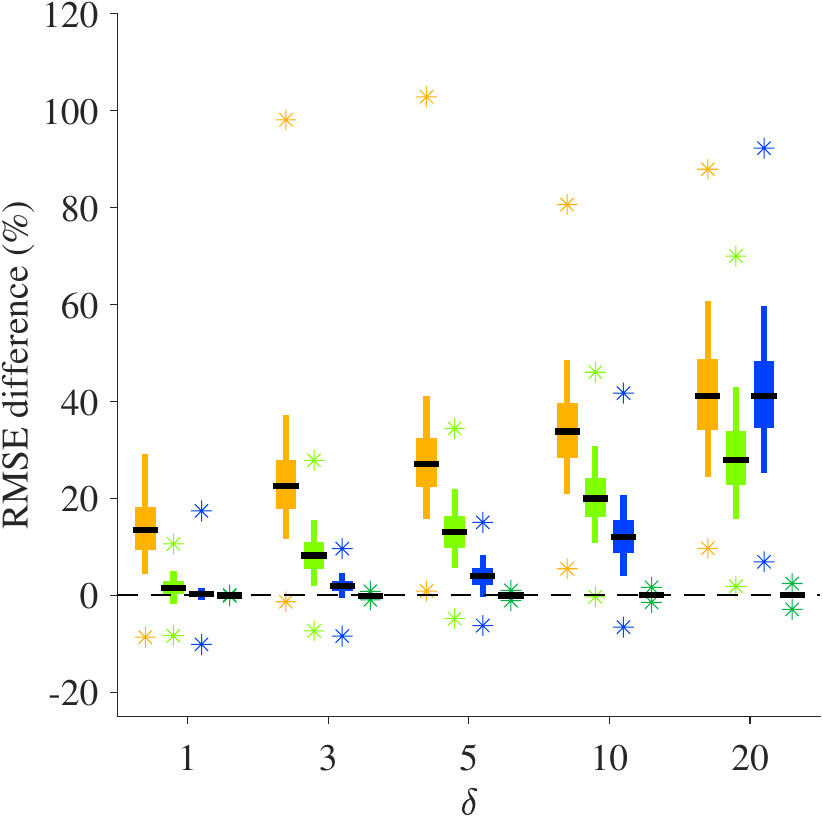}
\vspace{-2mm}
\caption{The proposed \STRTVBF outperforms the comparison methods with skew-$t$-distributed noise. RMSE differences from \STRTVBF's RMSE per cent of the \STRTVBF's RMSE. The difference to STVBF \cite{nurminen2015a} increases as skewness increases or when process noise variance reduces. (left) $q=0.5$, (right) $q=5$.} \label{fig:rmse_skewt}
\vspace{2ex}
\end{figure}
\begin{figure}[b]
\centering
\includegraphics[width=0.75\columnwidth]{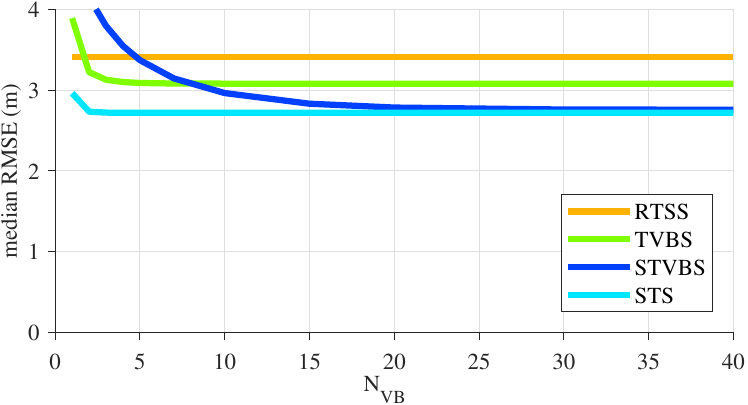}
\vspace{-3mm}
\caption{Five \STRTVBS  iterations give the converged state's RMSE, while STVBS \cite{nurminen2015a} can require 30 iterations. $q\!=\!5,\ \delta\!=\!5$.}
\label{fig:time_vs_acc_smoother}
\end{figure}


\section{Tests with real data}\label{sec:realdata}

\newcommand{\mut}{\mu_\text{t}}
\newcommand{\st}{\sigma_\text{t}}
\newcommand{\nut}{\nu_\text{t}}
\newcommand{\mun}{\mu_\text{n}}
\newcommand{\sn}{\sigma_\text{n}}

Two GNSS positioning data sets were collected in central London (UK) to test the filters' performance in a challenging real-world satellite positioning scenario with numerous non-line-of-sight measurements. The data include time-of-flight based pseudorange measurements from GPS satellites. Each set contains a trajectory that was collected by car using a u-blox M8 GNSS receiver. The lengths of the tracks are about 8\,km and 10\,km, the durations are about an hour for each, and measurements are received at about one-second intervals. The first track is used for fitting the filter parameters, while the second track is used for studying the filters' positioning errors. A ground truth was measured using an Applanix POS-LV220 system that improves the GNSS solution with tactical grade inertial measurement units. The GPS satellites' locations were obtained from the broadcast ephemerides provided by the International GNSS Service \cite{dow2009}. The algorithms are computed with \textsc{Matlab}.

In this test, both the user position $l_k\in\mathbb{R}^3$ and the receiver clock error $b_k\in\mathbb{R}$ follow the almost-constant velocity model used in \cite[Section IV]{axelrad1994}. Thus, the filter state being $x_k\!=\!\left[\begin{smallmatrix} l_k^\t&\dot{l}_k^\t&b_k&\dot{b}_k\end{smallmatrix}\right]^\t \in \mathbb{R}^8$, the state evolution model is
\begin{equation}
x_{k+1} = \begin{bmatrix} \eye_3 & d_k \eye_3 & \zeros_{3\times2} \\ \zeros_3 & \eye_3 & \zeros_{3\times2} \\ \zeros_{2\times3} & \zeros_{2\times3} & \left[\begin{smallmatrix} 1&d_k\\0&1 \end{smallmatrix} \right] \end{bmatrix} x_k + w_k ,
\end{equation}
where
\[
w_k \sim \N\left(0, \begin{bmatrix} \tfrac{q^2 d_k^3}{3} \eye_3 & \tfrac{q^2 d_k^2}{2} \eye_3 & \zeros_{3\times2} \\ \tfrac{q^2 d_k^2}{2} \eye_3 & q^2 d_k \eye_3 & \zeros_{3\times2} \\ \zeros_{2\times3} & \zeros_{2\times3} & \left[\begin{smallmatrix} s_b d_k+\tfrac{s_f d_k^3}{3}&\tfrac{s_f d_k^2}{2}\\\tfrac{s_f d_k^2}{2}&s_f d_k \end{smallmatrix} \right]  \end{bmatrix} \right) ,
\]
and $d_k$ is the time difference of the measurements in seconds. The used parameter values are $q=0.5\,\mathrm{m}/\mathrm{s}^\frac{3}{2}$, $s_b=70\,\frac{\mathrm{m}^2}{\mathrm{s}}$, and $s_f=0.6\,\tfrac{\mathrm{m}^2}{\mathrm{s}^3}$. The initial prior is a normal distribution with mean given by the Gauss--Newton method with the first measurement and a large covariance matrix.

The measurement model is the same pseudorange model that is used in the simulations of Section \ref{sec:simulations}, i.e.\
\begin{equation}
[y_k]_i = \| s_{i,k}-[x_k]_{1:3}\| + [x_k]_7 + [e_k]_i ,
\end{equation}
where $s_{i,k}$ is the 3-dimensional position of the $i$th satellite at the time of transmission. The measurement model is linearized with respect to $x_k$ at each prior mean using the first order Taylor series approximation. The compared filters are based on three different models for the measurement noise $e_k$ where
\begin{align}
[e_k]_i &\sim \mathrm{ST}(\mu,\sigma^2,\delta,\nu) \label{eq:gnss_skewt}; \\
[e_k]_i &\sim \mathcal{T}(\mut,\st^2,\nut) \label{eq:gnss_t}; \\
[e_k]_i &\sim \N(\mun,\sn^2) \label{eq:gnss_normal} .
\end{align}
The skew-$t$ model \eqref{eq:gnss_skewt} is the basis for STF and STVBF, the $t$ model \eqref{eq:gnss_t} is the basis for TVBF, and the normal model \eqref{eq:gnss_normal} is the basis for the extended KF (EKF) with 99\,\% measurement validation gating. The pseudoranges are unbiased in the line-of-sight case, so the location parameters are fixed to $\mun\!=\!\mut\!=\!0$. Furthermore, the degrees of freedom are fixed to $\nu\!=\!\nut\!=\!4$, which according to our experience is in general a good compromise between outlier robustness and performance based on inlier measurements, provides infinite kurtosis but finite skewness and variance, and is recommended in \cite{lange1989}. The deviation parameter $\sn$ of the normal model was then fitted to the data using the expectation--maximization algorithm \cite[Ch. 12.3.3]{Sarkka10} and the parameter $\st$ of the $t$ model as well as the parameters $\sigma$ and $\delta$ of the skew-$t$ model were fitted with the particle--Metropolis algorithm \cite[Ch.\ 12.3.4]{Sarkka10}. The location parameter $\mu$ was obtained by numerically finding the point that sets the mode of the skew-$t$ noise distribution to zero. Furthermore, we added a heuristic method for mitigating the STVBF's covariance underestimation, namely a posterior covariance scaling factor that scales each STVBF posterior covariance matrix with the number $3.25^2$. This scaling was found to provide the lowest RMSE for our data set, which ensures that we do not favor the proposed STF over STVBF. These three error distributions' parameters are given in Table \ref{table:gnss_params}, and the PDFs are plotted in Fig.~\ref{fig:sim_error_pdf}.

\begin{figure}
\centering
\includegraphics[width=0.7\columnwidth]{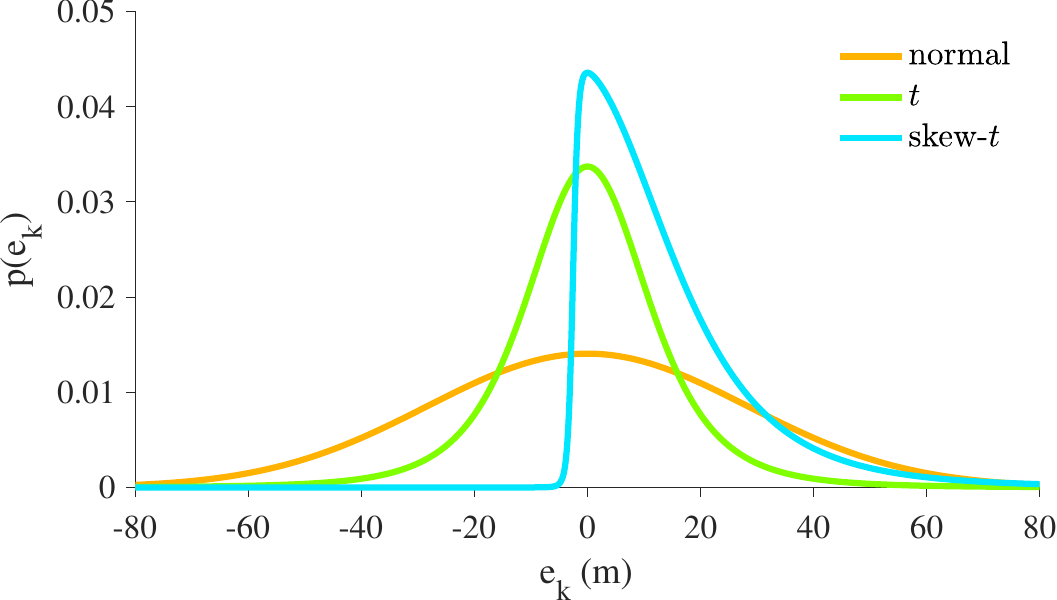}
\vspace{-3mm}
\caption{Measurement error distributions fitted to the real GNSS data for normal, $t$, and skew-$t$ error models. The modes are fixed to zero.} \label{fig:sim_error_pdf}
\end{figure}

\begin{table}[t]
\centering
\caption{Filter parameters for real GNSS data}
\label{table:gnss_params}
\begin{tabular}{ccc|cc|cc}
\multicolumn{3}{c|}{Skew-$t$, $\nu=4$} &\multicolumn{2}{c|}{$t$, $\nut=4$} & \multicolumn{2}{c}{Normal} \\
\hline\rule{0pt}{2ex}
$\mu$ (m) & $\sigma$ (m)&  $\delta$ (m) & $\mut$ (m) & $\st$ (m) & $\mun$ (m) & $\sn$ (m)  \\
\hline
\rule{0pt}{2ex}
-2.5 & 0.8 & 16.8 & 0 & 11.1 & 0 & 28.4 \\
\end{tabular}
\end{table}

Fig.\ \ref{fig:gnss_nvb_vs_err} shows the filter RMSEs as a function of the number of VB iterations. Both \STRTVBF and TVBF converge within five VB iterations, while the STVBF does not converge within 30 iterations but requires about 150 iterations. The empirical CDF graphs of the user position errors with five VB iterations for \STRTVBF and TVBF and with 150 iterations for STVBF are shown in Fig.\ \ref{fig:gnss_cdf}, and the RMSEs as well as the relative running times are given in Table \ref{tab:gnss_rmse}. The results show that modelling the skewness improves the positioning accuracy and is important especially for the accuracy in vertical direction. This can be explained by the sensitivity of the vertical direction to large measurement errors; due to bad measurement geometry the accuracy in the vertical direction is low even with line-of-sight measurements, so correct downweighting of erroneous altitude information requires careful modelling of the noise distribution's tails. The computational burden of our \STRTVBF implementation with five VB iterations is almost four times that of TVBF, but Fig.\ \ref{fig:gnss_nvb_vs_err} shows that two \STRTVBF iterations would already be enough to match TVBF's average RMSE.

Fig.\ \ref{fig:gnss_cdf} and Table \ref{tab:gnss_rmse} also show that the proposed \STRTVBF is more accurate than STVBF despite STVBF being considerable heavier computationally due to STVBF's 150 VB iterations. Furthermore, achieving this STVBF performance required awkward and data-dependent tuning to reduce the underestimation of posterior covariance matrix. The issues shown by STVBF are probably due to the highly skewed measurement noise distribution.
\begin{figure}
\centering
\includegraphics[width=0.48\columnwidth]{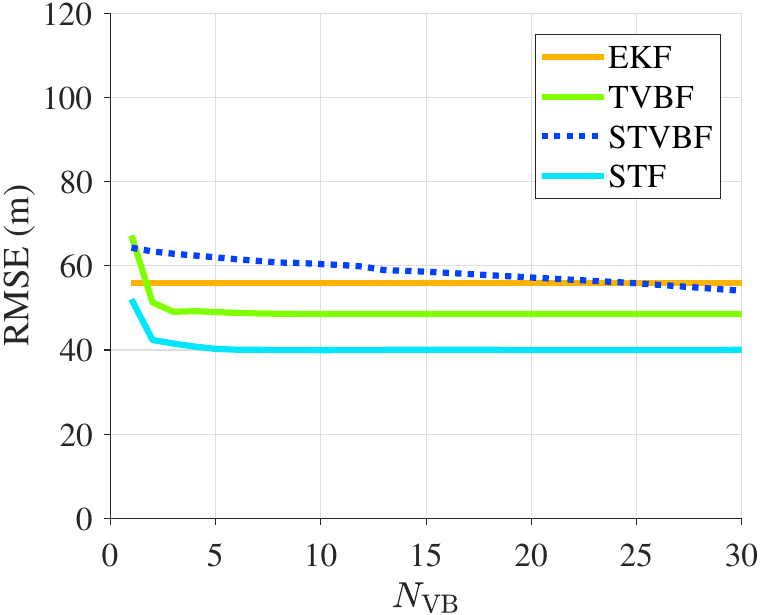}
\includegraphics[width=0.48\columnwidth]{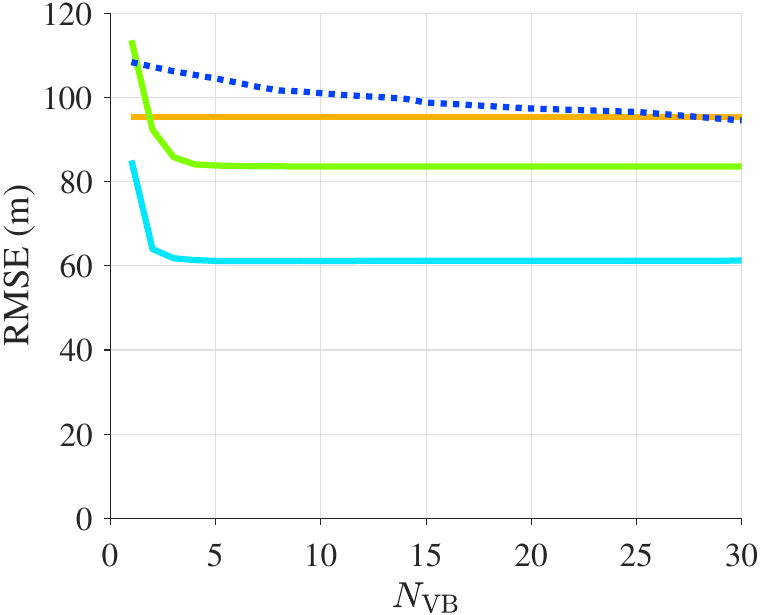}
\vspace{-3mm}
\caption{RMSE of horizontal (left) and vertical (right) position for real GNSS data as a function of the number of VB iterations} \label{fig:gnss_nvb_vs_err}
\end{figure}
\begin{figure}
\centering
\includegraphics[width=0.492\columnwidth,trim=0mm 0mm 1mm 0mm,clip=true]{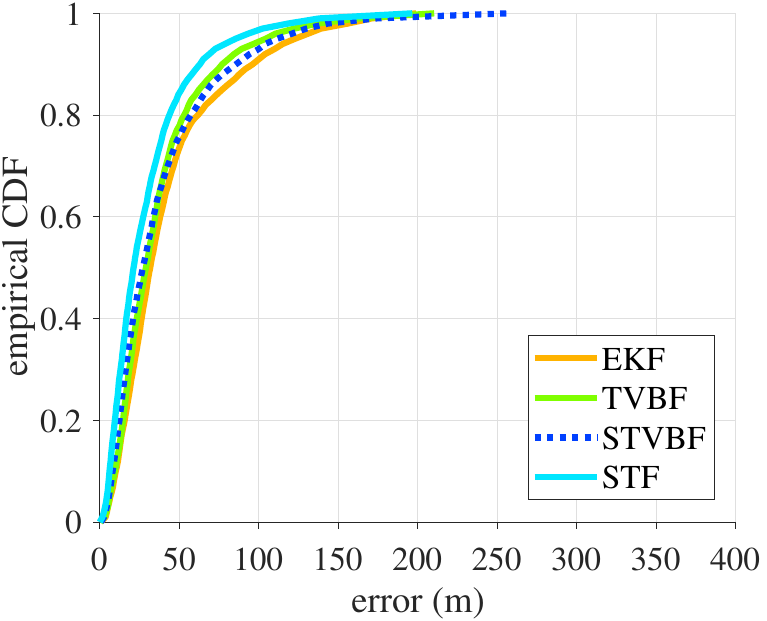}
\hfill
\includegraphics[width=0.492\columnwidth,trim=0mm 0mm 1mm 0mm,clip=true]{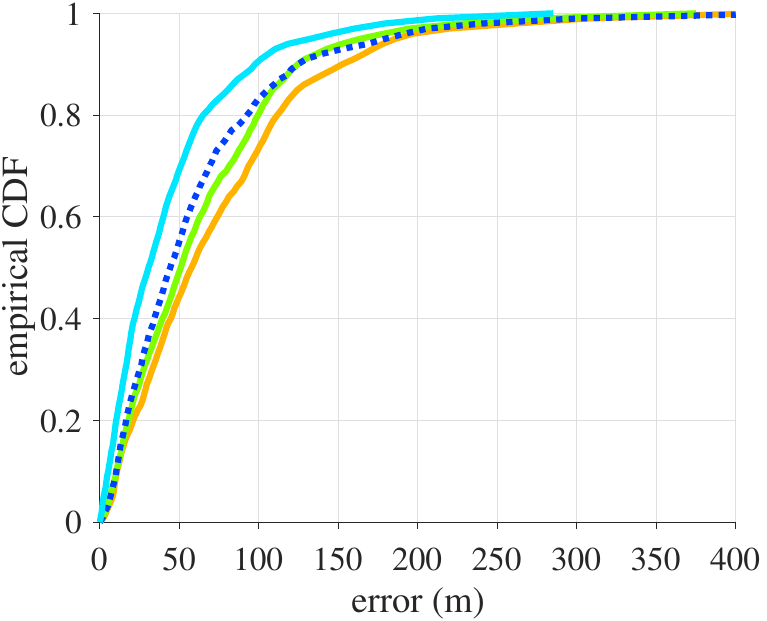}
\vspace{-6mm}
\caption{Empirical error CDFs for the real GNSS data for the horizontal error (left) and the vertical error (right)} \label{fig:gnss_cdf}
\end{figure}
\begin{table}[t]
\centering
\caption{The RMSEs and relative running times for real GNSS data}
\label{tab:gnss_rmse}
\begin{tabular}{l|c|c|c|c}
 & EKF & TVBF & STVBF & \STRTVBF  \\
\hline
\rule{-0.9ex}{2ex}
RMSE$_\text{horizontal}$ (m) & 56 & 49 & 52 & 40 \\
RMSE$_\text{vertical}$ (m) & 95 & 84 & 84 & 61 \\
\hline
\rule{-0.9ex}{2ex}
Running time & 1 & 1.3 & 18.7 & 3.8
\end{tabular}
\end{table}


\section{Extension to MVST} \label{sec:extension}
The skew $t$-distribution has several multivariate versions. In \cite{branco2001, azzalini2003, gupta2003skew} the PDF of the multivariate skew $t$-distribution (MVST) involves the CDF of a univariate $t$-distribution, while the definition of skew $t$-distribution given in \cite{sahu2003} involves the CDF of a multivariate $t$-distribution. These versions of MVST are special cases of more general multivariate skew-$t$-type distribution families, which include the multivariate canonical fundamental skew $t$-distribution (CFUST) \cite{arellanovalle2005} and the multivariate unified skew $t$-distribution \cite{arellanovalle2010}. A comprehensive review on the different variants of the MVST is given in \cite{lee2016}.

The MVST variant used in this article is based on the CFUST discussed in \cite{lee2016}, and it is the most general variant of the MVST. In this variant the parameter matrix $R\in \mathbb{R}^{n_z \times n_z}$ is a square positive-definite matrix, and $\Delta\in \mathbb{R}^{n_z \times n_z}$ is an arbitrary matrix. The PDF is
\begin{align}
\mathrm{MVST}(z;\mu,R,\Delta,\nu) 
= 2^{n_z} \mathrm{t}(z;\mu,\Omega,\nu) \, \mathrm{T}(\overline{z}; 0,L,\nu+n_z) ,
\end{align}
where $L\!=\!\eye_{n_z}-\Delta^\t \Omega^{-1} \Delta$, $\Omega=R+\Delta\Delta^\t$,
\begin{align}
&\mathrm{t}(z;\mu,\Sigma,\nu) \nonumber\\
&= \tfrac{\Gamma\left(\tfrac{\nu+n_z}{2}\right)}{(\nu\pi)^\frac{n_z}{2} \mathrm{det}(\Sigma)^\frac{1}{2}\Gamma\left(\tfrac{\nu}{2}\right)} \left( 1+ \tfrac{1}{\nu}(z-\mu)^\t\Sigma^{-1}(z-\mu) \right)^{-\frac{\nu+n_z}{2}}
\end{align}
is the PDF of the $n_z$-variate $t$-distribution and $\mathrm{T}(z;\mu,\Sigma,\nu)$ its CDF, and
\begin{equation}
\overline{z} = \Delta^\t \Omega^{-1}(z-\mu) \sqrt{\tfrac{\nu+n_z}{\nu+(z-\mu)^\t \Omega^{-1} (z-\mu)}} .
\end{equation}

The inference algorithms proposed in this paper can be extended to cover the case where the elements of the measurement noise vector are not statistically independent but jointly multivariate skew-$t$-distributed. When the measurement noise follows a MVST, i.e.
\begin{align} \label{eq:measmod_mvskewt}
e_k &\sim \mathrm{MVST}(0,R,\Delta,\nu),
\end{align}  
the smoothing and filtering algorithms presented in Algorithms \ref{table:smoothing} and \ref{table:filtering} apply with slight modifications. At the core of this convenient extension is the fact that the MVST can be represented by a similar hierarchical model as in \eqref{eq:hierarchical}. However, the shape matrices $\Delta$ and $R$ are not required to be diagonal, and the matrix $\Lambda_k$ has the form $\lambda_k \cdot I_{n_y}$, where $\lambda_k$ is a scalar with the prior
\begin{align} \label{eq:lambda_mvst}
\lambda_k &\sim \G(\tfrac{\nu}{2},\tfrac{\nu}{2}) .
\end{align}
Notice that when $\lambda_k$ admits a small value, all the measurement components can potentially be outliers simultaneously unlike with the independent univariate skew-$t$ components model. A univariate skew-$t$ is also a MVST, but a vector of univariate independently skew-$t$ distributed components is not a special case of MVST. This difference is illustrated by the PDF contour plots in Fig.~\ref{fig:2d_pdf}. See also further discussion in \cite{lee2016}.
\newsavebox{\smlmat}
\savebox{\smlmat}{$\left[\begin{smallmatrix}4\\4\end{smallmatrix}\right]$}
\begin{figure}[t]
\centering
\subfloat[]{\includegraphics[width=0.4\columnwidth]{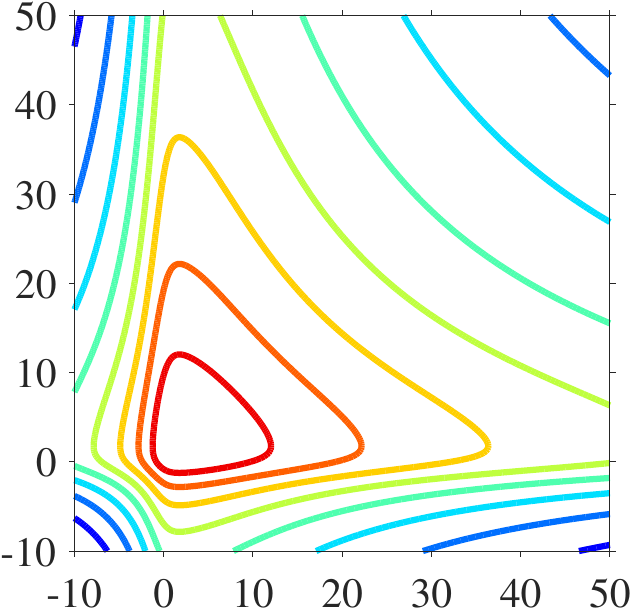}}
\quad
\subfloat[]{\includegraphics[width=0.4\columnwidth]{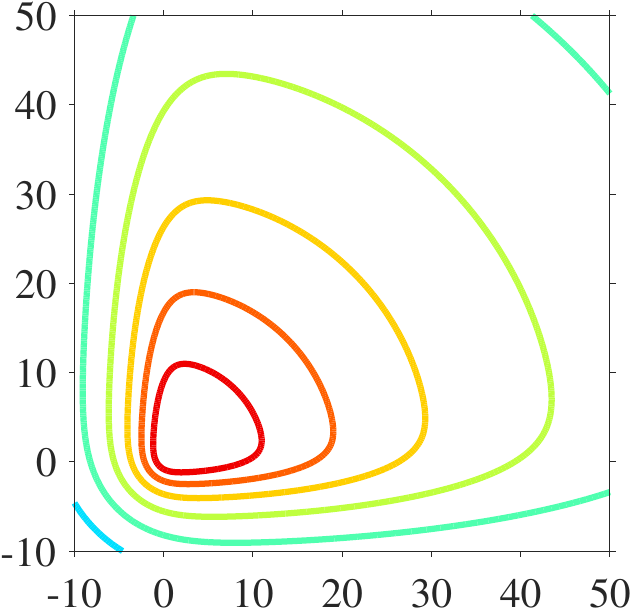}}
\vspace{-1mm}
\caption{PDF of bivariate measurement noise from (a) independent univariate skew-$t$ components model \eqref{eq:measmod_independent} with $\Delta\!=\!5 \eye_2$, $R\!=\!\eye_2$, $\nu\!=\!\usebox{\smlmat}$ and (b) MVST model \eqref{eq:measmod_mvskewt} with $\Delta\!=\!5 \eye_2$, $R\!=\!\eye_2$, $\nu\!=\!4$.}
\label{fig:2d_pdf}
\end{figure}

The specific modification required by MVST measurement noise to the STS algorithm in Algorithm~\ref{table:smoothing} is replacing line \ref{listing:lambda_smoothing} by
\begin{align}
\newcommand{\mlambda}[1]{\Lambda_{#1|K}}
\mlambda{k} \gets \frac{\nu+2n_y}{\nu+\tr\{\Psi_k\}} \cdot I_{n_y}
\end{align}
Similarly, the specific modification required by MVST measurement noise to the STF algorithm in Algorithm~\ref{table:filtering} is replacing line \ref{listing:lambda_filtering} by
\begin{align}
\Lambda_{k|k} \gets \frac{\nu+2n_y}{\nu+\tr\{\Psi_k\}} \cdot I_{n_y} .
\end{align}

\section{Performance Bound} \label{sec:bound}
\newcommand{\est}{\hat{x}}
\newcommand{\Ur}{r}

\subsection{Cram\'er--Rao lower bound}

The Bayesian Cram\'er--Rao lower bound (CRLB) $B$ is a lower bound for the mean-square-error (MSE) matrix of the state estimator $\est$ of the random variable $x$ using the observations $y$
\begin{equation}
M = \E_{p(x,y)} [(x-\est)(x-\est)^\t]
\end{equation}
in the sense that the matrix difference $M\!-\!B$ is positive semidefinite for any state estimator \cite[Ch.\ 2.4]{vantrees}. The regularity conditions sufficient for the positive-semidefiniteness to hold \cite[Ch.\ 2.4]{vantrees} are the integrability of the first two partial derivatives of the joint density $p(x_{1:k},y_{1:k})$ for an asymptotically unbiased estimator. These conditions are satisfied by the skew-$t$ likelihood and the normal prior distribution, even though they do not hold for $p(x_{1:k},u_{1:k},\Lambda_{1:k},y_{1:k})$ of the hierarchical model used in the proposed variational estimator due to restriction of $u_{1:k}$ to the positive orthant. This is sufficient, since we only seek the CRLB for the actual state $x$, not for the artificial variables $u$ and $\Lambda$.

The filtering CRLB $B_{k|k}$ for the state-space model \eqref{eq:state-evo}--\eqref{eq:measurementmodel} follows the recursion \cite{simandl2001}
\begin{subequations} \label{eq:crlb_skewtstatespace}
\begin{align}
&B_{1|0} = P_{1|0} \\
&B_{k+1|k+1} = \big((AB_{k|k}A^\t+Q)^{-1} + \E_{p(x_k|y_{1:k-1})}[\mathcal{I}(x_k)] \big)^{-1} ,
\end{align}
\end{subequations}
where $\mathcal{I}(e_k)$ is the Fisher information matrix of the measurement noise distribution. Furthermore, the smoothing CRLB for the state-space model \eqref{eq:state-evo}--\eqref{eq:measurementmodel} follows the recursion \cite{simandl2001}
\begin{align}
B_{k|K}
&= B_{k|k} + G_k (B_{k+1|K}-B_{k+1|k}) G_k^\t,
\end{align}
where
\begin{align}
&G_k = B_{k|k} A^\t B_{k+1|k}^{-1} ,\\
&B_{k+1|k} = A B_{k|k} A^\t + Q .
\end{align}
This coincides with the covariance matrix update of Rauch--Tung--Striebel smoother's backward recursion \cite{RTS-1965}.

The Fisher information matrix for the multivariate skew-$t$-distributed measurement noise of \eqref{eq:measmod_mvskewt} is
 \begin{equation} \label{eq:fisher_mvst}
     \mathcal{I}(x) = C^\t (R+\Delta\Delta^\t)^{-\frac{\t}{2}} E (R+\Delta\Delta^\t)^{-\frac{1}{2}} C,
 \end{equation}
 where
 \begin{equation} \label{eq:fisher_mvst_E}
 E = \E_{p(\Ur)} \left[ \tfrac{\nu+n_y}{\nu+\Ur^\t\Ur} \left( I_{n_y} - \tfrac{2}{\nu+\Ur^\t\Ur} \Ur\Ur^\t + \widetilde{R}_\Ur \widetilde{R}_\Ur^\t \right) \right]
 \end{equation}
with $\Ur\!\sim\!\mathrm{MVST}(0,\eye_{n_y}-\Theta\Theta^\t,\Theta,\nu)$, $\Theta=(R+\Delta\Delta^\t)^{-\frac{1}{2}}\Delta$, $A^\frac{1}{2}$ is a square-root matrix such that $A^\frac{1}{2}(A^\frac{1}{2})^\t=A$, $A^{-\frac{1}{2}}\triangleq(A^\frac{1}{2})^{-1}$, $A^{-\frac{\t}{2}}\triangleq((A^\frac{1}{2})^{-1})^\t$, and
\begin{align} \label{eq:R-r-MVST}
\widetilde{R}_\Ur &= \big(\mathrm{T}\big( \Theta^\t \Ur \sqrt{\tfrac{\nu+n_y}{\nu+\Ur^\t \Ur} }; 0,\eye_{n_y}-\Theta^\t \Theta,\nu+n_y\big) \big)^{-1} \nonumber\\
&\times (\eye_{n_y} - \tfrac{1}{\nu+\Ur^\t \Ur} \Ur\Ur^\t ) \Theta\nonumber\\
&\times \nabla_u \mathrm{T}(u;0,\eye_{n_y}-\Theta^\t \Theta,\nu+n_y)\Bigr|_{u=\Theta^\t \Ur \sqrt{\frac{\nu+n_y}{\nu+\Ur^\t \Ur}}} ,
\end{align}
where $\nabla_u$ is the gradient with respect to $u$. The derivation is given in Appendix~\ref{sec:crlb_derivations}. The evaluation of the expectation in \eqref{eq:fisher_mvst_E} is challenging with high-dimensional measurements due to the requirement to evaluate the CDF of the multivariate $t$-distribution and its partial derivatives. By the Woodbury matrix identity, the recursion \eqref{eq:crlb_skewtstatespace} is equivalent to the covariance matrix update of the Kalman filter with the measurement noise covariance $(R+\Delta\Delta^\t)^\frac{1}{2} E^{-1} ((R+\Delta\Delta^\t)^\frac{1}{2})^\t$.

In the model \eqref{eq:measmod_independent} the measurement noise components are independently univariate skew-$t$-distributed. In this case the Fisher information is obtained by applying \eqref{eq:fisher_mvst} to each conditionally independent measurement component and summing. The resulting formula matches with \eqref{eq:fisher_mvst}, the matrix $E$ now being a diagonal matrix with the diagonal entries
\begin{align} \label{eq:crlb_univ_e}
E_{ii} =& \E_{p(\Ur_i)}\! \bigg[ \tfrac{\nu_i-\Ur_i^2}{(\nu_i+\Ur_i^2)^2} \nonumber\\
+& \tfrac{\theta_i^2}{1-\theta_i^2} \tfrac{\nu_i^2}{(\nu_i+\Ur_i^2)^3}\! \left(\tau_{\nu_i+1}\big(\tfrac{\theta_i}{\sqrt{1-\theta_i^2}} \Ur_i \sqrt{\tfrac{\nu_i+1}{\nu_i+\Ur_i^2}}\big)\right)^2 \bigg] ,
\end{align}
where $\Ur_i \!\sim\! \mathrm{ST}(0,1\!-\!\theta_i^2,\theta_i,\nu_i)$ is a univariate skew-$t$-distributed random variable, $\theta_i\!=\!\Delta_{ii}/\sqrt{R_{ii}+\Delta_{ii}^2}$ and $\tau_\nu(x) \!=\! \mathrm{t}(x;0,1,\nu)/\mathrm{T}(x;0,1,\nu)$. Substituted into \eqref{eq:fisher_mvst}, this formula matches the Fisher information formula obtained for the univariate skew $t$-distribution in \cite{diciccio2011}. In this case only integrals with respect to one scalar variable are to be evaluated numerically.

\subsection{Simulation}

We study the CRLB in \eqref{eq:crlb_skewtstatespace} of a linear state-space model with skew-$t$-distributed measurement noise by generating realizations of the model
\begin{subequations} \label{eq:crlbtest_model}
\begin{align}
x_{k+1} &= \left[ \begin{smallmatrix} 1&1\\0&1 \end{smallmatrix} \right] x_k + w_k,\ w_k \sim \N(0,Q) \\
y_k &= \left[ \begin{smallmatrix} 1&0 \end{smallmatrix} \right] x_k + e_k,\ e_k \sim \mathrm{ST}(\mu,\sigma^2,\delta,\nu) ,
\end{align}
\end{subequations}
where $x\in\mathbb{R}^2$ is the state, $Q=\left[ \begin{smallmatrix} 1/3 & 1/2\\1/2&1 \end{smallmatrix} \right]$ is the process noise covariance matrix, $y_k\in\mathbb{R}$ is the measurement, and $\nu$ and $\delta_c$ are parameters that determine other parameters by the formulas
\begin{subequations}
\begin{align}
\mu &= -\gamma \delta_c \sigma, \\
\sigma^2 & = \tfrac{\omega^2}{\frac{\nu}{\nu-2}(1+\delta_c^2)-\gamma^2 \delta_c^2} ,\\
\delta &= \delta_c \sigma , \\
\gamma&=\sqrt{\tfrac{\nu}{\pi}} \tfrac{\Gamma((\nu-1)/2)}{\Gamma(\nu/2)} .
\end{align}
\end{subequations}
Thus, the measurement noise distribution is zero-mean and has the variance $\omega^2\!=\!5^2$. We generate 10\,000 realizations of a 50-step process, and compute the CRLB and mean-square-errors (MSE) of the bootstrap PF with 2000 particles and the \STRTVBF. The CRLB and the MSEs were computed for the first component of the state at the last time instant $[x_{50}]_1$.

Fig.\ \ref{fig:crlb} shows the CRLB of the model \eqref{eq:crlbtest_model}. The figure shows that increase in the skewness as well as heavy-tailedness can decrease the CRLB significantly, which suggests that a nonlinear filter can be significantly better than the KF, which gives MSE 11.8 for all $\delta_c$ and $\nu$. Fig.\ \ref{fig:crlb_comp} shows the MSEs of PF and \STRTVBF. As expected, when $\nu\rightarrow\infty$ and $\delta_c\rightarrow0$, the PF's MSE approaches the CRLB. \STRTVBF is only slightly worse than PF. The figures also show that although the CRLB becomes looser when the distribution becomes more skewed and/or heavy-tailed, it correctly indicates that modeling the skewness still improves the filtering performance.
\begin{figure}[htbp]
   \centering
   \includegraphics[width=0.75\columnwidth]{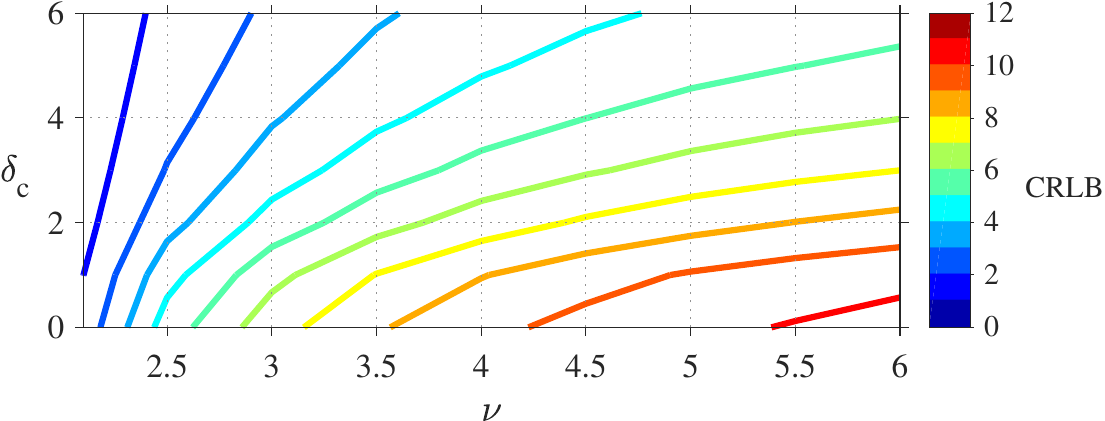}
   \vspace{-3mm}
   \caption{The CRLB of the 50th time instant for the model \eqref{eq:crlbtest_model} with a fixed measurement noise variance. Skewness and heavy-tailedness decreases the CRLB significantly.}
   \label{fig:crlb}
\end{figure}
\vspace{-4mm}
\begin{figure}[htbp]
   \centering
   \includegraphics[width=0.445\columnwidth]{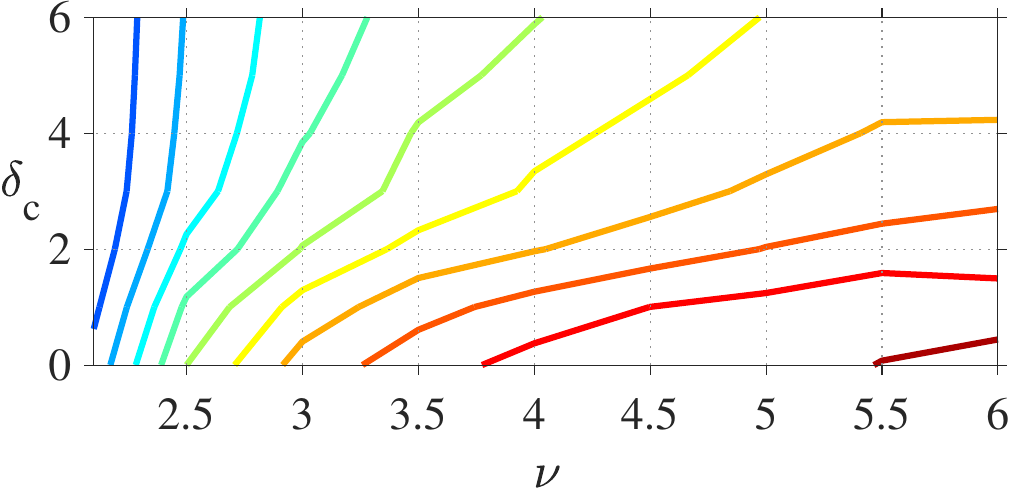}
   \hfill
   \includegraphics[width=0.445\columnwidth]{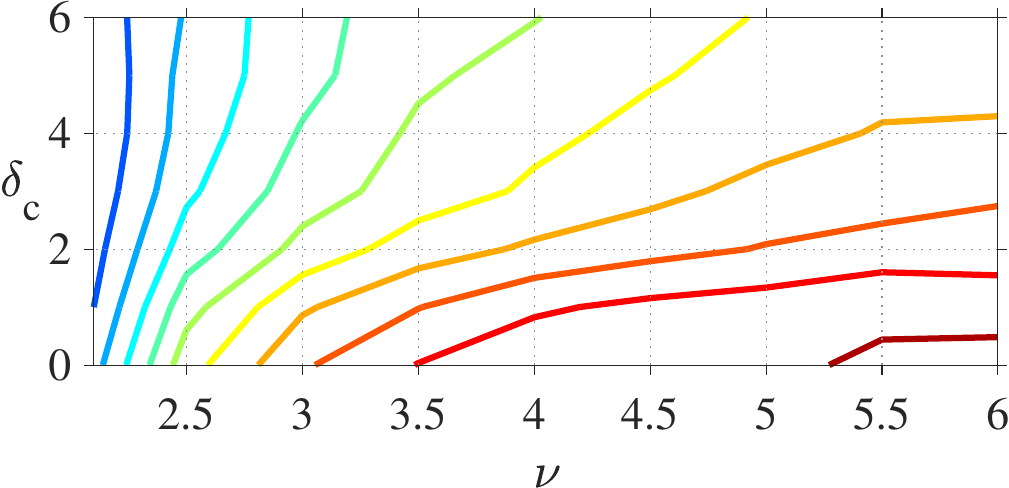}
   \includegraphics[width=0.078\columnwidth,trim=147mm -9mm 0mm 0mm,clip]{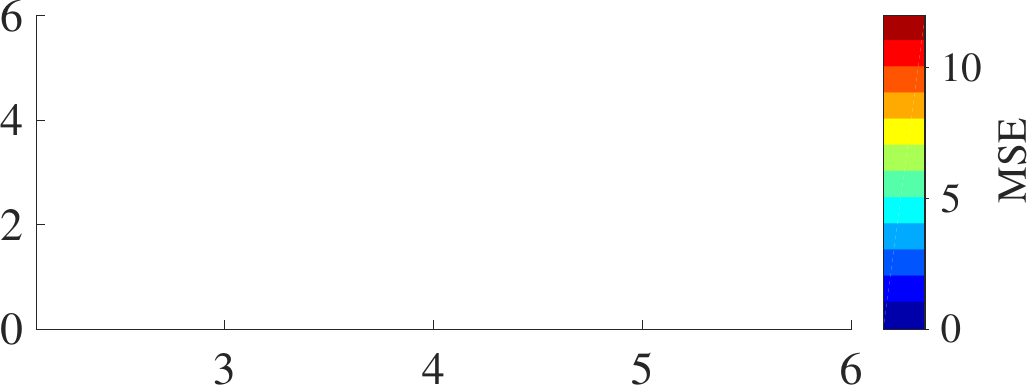}
   \vspace{-5mm}
   \caption{The MSEs of PF (left) and \STRTVBF's (right) are close to each other.}
   \label{fig:crlb_comp}
\end{figure}
\section{Conclusions} \label{sec:conclusions}
We have proposed a novel approximate filter and smoother for linear state-space models with heavy-tailed and skewed measurement noise distribution, and derived the Cram\'er--Rao lower bounds for the filtering and smoothing estimators. The algorithms are based on the variational Bayes approximation, where some posterior independence approximations are removed from the earlier versions of the algorithms to avoid significant underestimation of the posterior covariance matrix. Removal of independence approximations is enabled by the expectation propagation (EP) algorithm for approximating the mean and covariance matrix of truncated multivariate normal distribution. A greedy processing sequence is given for the EP. Simulations and real-data tests with GNSS positioning data show that the proposed algorithms outperform the state-of-the-art low-complexity methods, including the earlier skew-$t$ VB filter, in a real-world estimation problem.
\bibliographystyle{IEEEtran}
\bibliography{IEEEabrv,VB-Skewness}

\appendices

\allowdisplaybreaks
\section{Derivations for the skew-$t$ smoother} \label{sec:smoother}

\subsection{Derivations for $q_{xu}$} \label{sec:Smootherq_xu}
Eq.\ \eqref{eqn:IterativeOptimizationxu} gives
\begin{align}
&\log q_{xu}(x_{1:K},u_{1:K}) = \log\N(x_1; x_{1|0}, P_{1|0}) \nonumber\\
&+\sum_{l=1}^{K-1} \log\N(x_{l+1}; Ax_l, Q) \nonumber\\
&+ \sum_{k=1}^K \E_{q_\Lambda} [\log\N(y_k; Cx_k \!+\! \Delta u_k, \Lambda_k^{-1} R) \nonumber\\
&+\log\N_+(u_k;0,\Lambda_k^{-1})] + c \\
=& \log\N(x_1; x_{1|0}, P_{1|0}) + \sum_{l=1}^{K-1} \log\N(x_{l+1}; Ax_l, Q) \nonumber\\
&- \frac{1}{2} \sum_{k=1}^K \E_{q_\Lambda} [(y_k\!-\!Cx_k\!-\!\Delta u_k)^\t R^{-1} \Lambda_k (y_k\!-\!Cx_k\!-\!\Delta u_k) \nonumber\\
&+ u_k^\t \Lambda_k u_k] + c \\
=& \log\N(x_1; x_{1|0}, P_{1|0}) + \sum_{l=1}^{K-1} \log\N(x_{l+1}; Ax_l, Q) \nonumber\\
&- \frac{1}{2} \sum_{k=1}^K \{ (y_k\!-\!Cx_k\!-\!\Delta u_k)^\t R^{-1} \Lambda_{k|K} (y_k\!-\!Cx_k\!-\!\Delta u_k) \nonumber\\
&+ u_k^\t \Lambda_{k|K} u_k \} + c \\
=&  \log\N(x_1; x_{1|0}, P_{1|0}) + \sum_{l=1}^{K-1} \log\N(x_{l+1}; Ax_l, Q) \nonumber\\
&+ \sum_{k=1}^K \{ \log\N(y_k; Ax_k + \Delta u_k, \Lambda_{k|K}^{-1} R) \nonumber\\
&+ \log\N(u_k;0,\Lambda_{k|K}^{-1}) \} + c \\
=&  \log\N\left( \left[\begin{smallmatrix} x_1 \\ u_1 \end{smallmatrix}\right] ; \left[\begin{smallmatrix} x_{1|0}\\0 \end{smallmatrix}\right] , \left[\begin{smallmatrix} P_{1|0} & \zeros \\\zeros & \Lambda_{1|K}^{-1} \end{smallmatrix}\right] \right) \nonumber\\
&+ \sum_{l=1}^{K-1} \log \N\left( \left[\begin{smallmatrix} x_{l+1}\\u_{l+1} \end{smallmatrix}\right]; \left[\begin{smallmatrix} A & \zeros \\\zeros & \zeros \end{smallmatrix}\right] \left[\begin{smallmatrix} x_l\\u_l \end{smallmatrix}\right], \left[\begin{smallmatrix} Q & \zeros \\\zeros & \Lambda_{l+1|K}^{-1} \end{smallmatrix}\right] \right) \nonumber\\
&+ \log\N\left( y_k; \left[\begin{smallmatrix} C & \Delta \end{smallmatrix}\right] \left[\begin{smallmatrix} x_k\\u_k \end{smallmatrix}\right], \Lambda_{k|K}^{-1} R \right) + c, u_{1:K}\!\geq\!0 ,
\end{align}
where $c$ is a term that is constant with respect to $(x_{1:K},u_{1:K})$ but admits different values in different equations, $\Lambda_{k|K} \triangleq \E_{q_\Lambda}[\Lambda_k]$ is derived in Appendix \ref{sec:smoother}, Subsection B, and $u_{1:K}\geq0$ means that all the components of all $u_k$ are required to be nonnegative for each $k=1\cdots K$. Up to the truncation of the $u$ components, $q_{xu}(x_{1:K}, u_{1:K})$ has thus the same form as the joint smoothing posterior of a linear state-space model with the state transition matrix $\widetilde{A} \triangleq \left[ \begin{smallmatrix} A&\zeros\\\zeros&\zeros \end{smallmatrix} \right]$, process noise covariance matrix $\widetilde{Q}_k \triangleq \left[ \begin{smallmatrix} Q&\zeros\\\zeros&\Lambda_{k+1|K}^{-1} \end{smallmatrix} \right]$, measurement model matrix $\widetilde{C} \triangleq \left[ \begin{smallmatrix} C & \Delta  \end{smallmatrix} \right]$, and measurement noise covariance matrix $\widetilde{R} \triangleq \Lambda_{k|K}^{-1} R$. We denote the PDFs related to this state-space model with $\widetilde{p}$.

It would be possible to compute the truncated multivariate normal posterior of the joint smoothing distribution $\widetilde{p}\left(\left[\begin{smallmatrix}x_{1:K}\\u_{1:K}\end{smallmatrix}\right]|y_{1:K}\right)$, and account for the truncation of $u_{1:K}$ to the positive orthant using the \recursive truncation. However, this would be impractical with large $K$ due to the large dimensionality $K\times (n_x+n_y)$. A feasible solution is to approximate each filtering distribution in the Rauch--Tung--Striebel smoother's (RTSS \cite{RTS-1965}) forward filtering step with a multivariate normal distribution by
\begin{align}
\widetilde{p}( x_k,u_k|y_{1:k}) &= \tfrac{1}{C}\,\N\left(\left[\begin{smallmatrix} x_k\\u_k \end{smallmatrix}\right]; z_{k|k}', Z_{k|k}' \right) \cdot \iverson{u_k\!\geq\!0}\\
&\approx \N\left(\left[\begin{smallmatrix} x_k\\u_k \end{smallmatrix}\right]; z_{k|k}, Z_{k|k} \right) 
\end{align}
for each $k=1\cdots K$, where $\iverson{u_k\!\geq\!0}$ is the Iverson bracket notation,
$C$ is the normalization factor, and $z_{k|k} \triangleq \E_{\widetilde{p}}\left[\left[\begin{smallmatrix}x_k\\u_k\end{smallmatrix}\right]|y_{1:k}\right]$ and $Z_{k|k} \triangleq \var_{\widetilde{p}}\left[\left[\begin{smallmatrix}x_k\\u_k\end{smallmatrix}\right]|y_{1:k}\right]$ are approximated using the \recursive truncation. Given the multivariate normal approximations of the filtering posteriors $\widetilde{p}(x_k,u_k|y_{1:k})$, by Lemma \ref{lem:gaussian_smoother} the backward recursion of the RTSS gives multivariate normal approximations of the smoothing posteriors $\widetilde{p}(x_k,u_k|y_{1:K})$. The quantities required in the derivations of Subsection B are the expectations of the smoother posteriors $x_{k|K} \triangleq \E_{q_{xu}}[x_k]$, $u_{k|K} \triangleq \E_{q_{xu}}[u_k]$, and the covariance matrices $Z_{k|K} \triangleq \var_{q_{xu}}\left[\begin{smallmatrix} x_k\\u_k \end{smallmatrix}\right]$ and $U_{k|K} \triangleq \var_{q_{xu}}[u_k]$.

\begin{lemma} \label{lem:gaussian_smoother}
\newcommand{\xu}{z}
\newcommand{\cxu}{Z}
Let $\{\xu_k\}_{k=1}^K$ be a linear--Gaussian process, and $\{y_k\}_{k=1}^K$ a measurement process such that
\begin{subequations}
\begin{align}
\xu_1 &\sim \N(\xu_{1|0},\cxu_{1|0}) \\
\xu_k | \xu_{k-1} &\sim \N(A \xu_{k-1}, Q) \\
y_k | \xu_k &\sim p(y_k|z_k),
\end{align}
\end{subequations}
where $p(y_k|z_k)$ is a known distribution, and the standard Markovianity assumptions hold. Then, if the filtering posterior $p(\xu_k|y_{1:k})$ is a multivariate normal distribution for each $k$, then for each $k < K$ holds $\xu_k | y_{1:K} \sim \N(\xu_{k|K}, \cxu_{k|K})$, where
\begin{align}
\xu_{k|K} &= \xu_{k|k} + G_k(\xu_{k+1|K}-A \xu_{k|k}) ,\\
\cxu_{k|K} &= \cxu_{k|k} + G_k(\cxu_{k+1|K}-A\cxu_{k|k}A^\t-Q)G_k^\t ,\\
G_k &= \cxu_{k|k} A^\t (A\cxu_{k|k}A^\t + Q)^{-1} ,
\end{align}
and $\xu_{k|k}$ and $\cxu_{k|k}$ are the mean and covariance matrix of the filtering posterior $p(\xu_k|y_{1:k})$.
\end{lemma}
\begin{proof}
The details are omitted here because the proof is mostly similar to that of \cite[Theorem 8.2]{sarkka2013}.
\end{proof}

\subsection{Derivations for $q_\Lambda$} \label{sec:Smootherq_L}

Eq.\ \eqref{eqn:IterativeOptimizationL} gives
\begin{align}
\log& q_\Lambda(\Lambda_{1:K})=\sum_{k=1}^K \big\{\E_{q_{xu}}[\log p(y_k | x_k,u_k,\Lambda_k) + \log p(u_k | \Lambda_k)] \nonumber\\
& + \log p(\Lambda_k)\big\}+c.
\end{align}
Thus, $q_\Lambda(\Lambda_{1:K})=\prod_{k=1}^K q_\Lambda(\Lambda_{k})$.

In the model with independent univariate skew-$t$-distributed measurement noise components \eqref{eq:measmod_independent}, the diagonal entries of $\Lambda_k$ are separate random variables, as given in \eqref{eq:hierarchical_lambda}. Therefore,
\begin{align}
&\log q_\Lambda(\Lambda_k) \nonumber\\
=& -\frac{1}{2}\E_{{q}_{xu}}\big[\tr\{(y_k\!-\!Cx_k\!-\!\Delta u_k)(y_k\!-\!Cx_k\!-\!\Delta u_k)^\t R^{-1}\Lambda_k\}\nonumber\\
& +\!\tr\{ u_k  u_k^\t{}\Lambda_k\}\big] \!+\! \sum_{i=1}^{n_y}\left( \tfrac{\nu_i}{2}\log[\Lambda_k]_{ii}- \tfrac{\nu_i}{2} [\Lambda_k]_{ii} \right)+c\\
=&-\frac{1}{2} \tr\big\{ \big[ \big((y_k\!-\!Cx_{k|K}\!-\!\Delta u_{k|K})(y_k\!-\!Cx_{k|K}\!-\!\Delta u_{k|K})^\t \nonumber\\
&+\left[\begin{smallmatrix}C&\Delta\end{smallmatrix}\right] Z_{k|K} \left[\begin{smallmatrix}C^\t\\\Delta^\t\end{smallmatrix}\right]\big) R^{-1} + (u_{k|K} u_{k|K}^\t\!+\!U_{k|K}) \big] \Lambda_k \big\} \nonumber\\
&+ \sum_{i=1}^{n_y}\left( \tfrac{\nu_i}{2}\log[\Lambda_k]_{ii}- \tfrac{\nu_i}{2} [\Lambda_k]_{ii} \right)+c \\
=& \sum_{i=1}^{n_y} \left( \tfrac{\nu_i}{2} \log[\Lambda_k]_{ii} - \tfrac{\nu_i+\Psi_{ii}}{2}[\Lambda_k]_{ii} \right) +c ,
\end{align}
where
\begin{align}
\Psi =& (y_k\!-\!Cx_{k|K}\!-\!\Delta u_{k|K})(y_k\!-\!Cx_{k|K}\!-\!\Delta u_{k|K})^\t R^{-1} \nonumber\\
&+ \left[\begin{smallmatrix}C&\Delta\end{smallmatrix}\right] Z_{k|K} \left[\begin{smallmatrix} C^\t\\\Delta^\t\end{smallmatrix}\right] R^{-1} +u_{k|K}u_{k|K}^\t+U_{k|K}. \label{eq:Filterpsi}
\end{align}
Therefore,
\begin{align}
q_\Lambda&(\Lambda_k)=\prod_{i=1}^{n_y}\G\left([\Lambda_k]_{ii};\tfrac{\nu_i}{2}+1,\tfrac{\nu_i+\Psi_{ii}}{2}\right).
\end{align}
In the derivations of Subection A, $\Lambda_{k|K} \!\triangleq\! \E_{q_\Lambda}[\Lambda_k]$ is required. $\Lambda_{k|K}$ is a diagonal matrix with the diagonal elements
\begin{align}
[\Lambda_{k|K}]_{ii}=\tfrac{\nu_i+2}{\nu_i+\Psi_{ii}}.
\end{align}

In the model \eqref{eq:measmod_mvskewt} with multivariate skew-$t$-distributed measurement noise $\Lambda_k$ is of the form $\lambda_k \cdot I_{n_y}$. There, $\lambda_k$ is a scalar random variable, and there is just one degrees-of-freedom parameter $\nu$, as given in \eqref{eq:lambda_mvst}. Therefore,
\begin{align}
&\log q_\Lambda(\lambda_k) \nonumber\\
=& -\frac{1}{2}\E_{q_{xu}}[\tr\{(y_k\!-\!Cx_k\!-\!\Delta u_k)(y_k\!-\!Cx_k\!-\!\Delta u_k)^\t R^{-1}\lambda_k\}] \nonumber\\
& - \frac{1}{2}\E_{q_{xu}}[\tr\{ u_k  u_k^\t\lambda_k\}] +\tfrac{\nu+2n_y-1}{2}\log\lambda_k - \frac{\nu}{2} \lambda_k +c\\
=& \tfrac{\nu+2n_y-1}{2} \log\lambda_k - \tfrac{\nu+\tr\{\Psi\}}{2}\lambda_k ,
\end{align}
where $\Psi$ is given in \eqref{eq:Filterpsi}. Thus,
\begin{align}
q_\Lambda(\lambda_k) &= \G\left(\lambda_k;\tfrac{\nu+2n_y}{2},\tfrac{\nu+\tr\{\Psi\}}{2}\right).
\end{align}
so the required expectation is
\begin{align}
\Lambda_{k|K} = \tfrac{\nu+2n_y}{\nu+\tr\{\Psi\}} \cdot \eye_{n_y} .
\end{align}

\allowdisplaybreaks
\section{Derivation for the Fisher information of MVST} \label{sec:crlb_derivations}

\newcommand{\half}{\frac{1}{2}}
\newcommand{\thalf}{\frac{\t}{2}}
\newcommand{\tpdf}{\mathrm{t}}
\newcommand{\tcdf}{\mathrm{T}}
\newcommand{\zz}{r}
\newcommand{\sqrtr}{\sqrt{\tfrac{\nu+n_y}{\nu+\zz^\t \zz}}}

Consider the multivariate skew-$t$ measurement model $y|x \sim \mathrm{MVST}(Cx,R,\Delta,\nu)$, where $C\in\mathbb{R}^{n_y\times n_x}$, $R\in\mathbb{R}^{n_y \times n_y}$, $\Delta \in \mathbb{R}^{n_y \times n_y}$, and $\nu\in\mathbb{R}^+$. The logarithm of the PDF of $y|x$ is
\begin{align} \label{eq:lhoodAppB}
\log p(y|&x) = \log(2^{n_y}/\det(\Omega)^\half) + \log\tpdf(\zz; 0,\eye_{n_y},\nu) \nonumber\\
&+ \log\tcdf(\Delta^\t \Omega^{-\thalf} \zz \sqrtr; 0,L, \nu+n_y) ,
\end{align}
where $r=\Omega^{-\frac{1}{2}}(y-Cx)$ is a function of $x$ and $y$, $\Omega=R+\Delta\Delta^\t$, $L= I_{n_y}-\Delta^\t\Omega^{-1}\Delta$, and $\tpdf(\cdot; \mu,\Sigma,\nu)$ and $\tcdf(\cdot; \mu,\Sigma,\nu)$ denote the PDF and CDF of the scaled non-central multivariate $t$-distribution with $\nu$ degrees of freedom. $A^\half$ is a square-root matrix such that $A^\half(A^\half)^\t=A$, $A^{-\half}\triangleq(A^\half)^{-1}$, and $A^{-\thalf}\triangleq((A^{\half})^{-1})^\t$.

The Hessian matrix of the term $\log\tpdf(\zz;0,\eye_{n_y},\nu)$ is derived in \cite{piche2016b}, and it is
\begin{align}
&\tfrac{\d^2}{\d x^2} \log\tpdf(\zz;0,I_{n_y},\nu) \nonumber\\
=& \tfrac{\nu+n_y}{\nu} C^\t \Omega^{-\thalf} \left( -\tfrac{1}{1+\tfrac{1}{\nu}\zz^\t \zz} I_{n_y} + \tfrac{2/\nu}{(1+\tfrac{1}{\nu}\zz^\t \zz)^2} \zz\zz^\t \right) \Omega^{-\half} C \\
=& \tfrac{\nu+n_y}{\nu+\zz^\t \zz} C^\t \Omega^{-\thalf} \left( -I_{n_y} +\tfrac{2}{\nu+\zz^\t \zz} \zz\zz^\t \right) \Omega^{-\half} C
\end{align}
The second term in \eqref{eq:lhoodAppB} can be differentiated twice using the chain rule $\tfrac{\d^2 \log(f)}{\d x^2} = \tfrac{1}{f} \tfrac{\d^2 f}{\d x^2} - \tfrac{1}{f^2} \left( \tfrac{\d f}{\d x}\right)^\t \! \tfrac{\d f}{\d x}$, which gives
\begin{align}
& \tfrac{\d^2}{\d x^2}  \log\tcdf(\Delta^\t \Omega^{-\thalf} \zz \sqrtr; 0,L,\nu+n_y) \nonumber\\
=& \big( \tcdf(\Delta^\t \Omega^{-\thalf} \zz \sqrtr; 0,L,\nu+n_y) \big)^{-1} g(\zz) \nonumber\\
&- \big( \tcdf(\Delta^\t \Omega^{-\thalf} \zz \sqrtr; 0,L,\nu+n_y) \big)^{-2} D_r^\t P_r^\t P_r D_r , \label{eq:fisher_integrand}
\end{align}
where the function $g$ is antisymmetric because it is the second derivative of a function that is antisymmetric up to an additive constant,
\begin{equation}
P_r=\tfrac{\d}{\d u} \tcdf(u;0,L,\nu+n_y)\Bigr|_{u=\Delta^\t \Omega^{-\thalf} \zz\sqrtr},
\end{equation}
and
\begin{align}
D_r =& \tfrac{\d}{\d x} \Delta^\t \Omega^{-\thalf} \zz\sqrtr \\
=& \sqrtr \Delta^\t \Omega^{-\thalf} (\tfrac{1}{\nu+\zz^\t \zz} \zz\zz^\t - I_{n_y}) \Omega^{-\half} C .
\end{align}
Because the function $g$ is antisymmetric, $\int g(r) p(r) \d y = 0$ for any symmetric function $p$ for which the integral exists.

We now outline the proof of integrability of certain functions to show that the CRLB exists and fulfils the regularity conditions given in \cite[Ch.\ 2.4]{vantrees}. The integral $\int g(r)\, \tpdf(r;0,1,\nu) \d y$ exists because the terms of $g$ are products of positive powers of rational expressions where the denominator is of a higher degree than the nominator and derivatives of $\mathrm{T}(u;0,1,\nu+n_y)$ evaluated at $\Delta^\t \Omega^{-\thalf} \zz\sqrtr$, which is a bounded continuous function of $y$. The integral
\[
\begin{split}
\int &\big( \tcdf(\Delta^\t \Omega^{-\thalf} \zz \sqrtr; 0,L,\nu+n_y) \big)^{-1} D_r^\t P_r^\t P_r D_r  \\
&\times \tfrac{2}{\det(\Omega)^{\half}} \tpdf(\zz; 0,\eye_{n_y},\nu) \d y
\end{split}
\]
also exists because $\big( \tcdf(\Delta^\t \Omega^{-\thalf} \zz \sqrtr; 0,L,\nu+n_y) \big)^{-1}$ and $P_r$ are bounded and continuous and $D_r$ is a positive power of a rational expression where the denominator is of a higher degree than the nominator. Similar arguments show the integrability of the first and second derivative of the likelihood $p(y|x)$, which guarantees that the regularity conditions of the CRLB are satisfied.

Thus, the expectation of \eqref{eq:fisher_integrand} is
\begin{align}
&\E_{p(y|x)}\left[\tfrac{\d^2}{\d x^2}  \log\tcdf(\Delta^\t \Omega^{-\thalf} \zz \sqrtr; 0,L,\nu+n_y) \right] \nonumber\\
=& \int g(\zz)\, \tfrac{2}{\det(\Omega)^{\half}} \tpdf(\zz; 0,\eye_{n_y},\nu) \d y - \int \tfrac{2}{\det(\Omega)^{\half}} \tpdf(\zz; 0,\eye_{n_y},\nu) \nonumber\\
& \hspace{-1.8mm}\times\! \big( \tcdf(\Delta^\t \Omega^{-\thalf} \zz \sqrtr; 0,L,\nu+n_y) \big)^{-1}\! D_r^\t P_r^\t P_r D_r \mathrm{d} y \\
=& \int 2g(\zz) \tpdf(\zz; 0,\eye_{n_y},\nu) \d r - \int 2 \,\tpdf(\zz; 0,\eye_{n_y},\nu) \nonumber\\
& \hspace{-1.8mm}\times \! \big( \tcdf(\Delta^\t \Omega^{-\thalf} \zz \sqrtr; 0,L,\nu+n_y) \big)^{-1} D_r^\t P_r^\t P_r D_r \d r \\
=&-\!  \E_{p(\zz|x)} \left[ \big( \tcdf(\Theta^\t \zz \sqrtr; 0,L,\nu+n_y) \big)^{-2} D_r^\t P_r^\t P_r D_r  \right]\!,
\end{align}
where $\Theta\!=\!\Omega^{-\half}\Delta$, and $\zz|x\!\sim\!{\mathrm{MVST}(0,I_{n_y}\!-\!\Theta\Theta^\t,\Theta,\nu)}$ because $z\!\sim\!\mathrm{MVST}(\mu,R,\Delta,\nu)$ implies $Az\!\sim\!\mathrm{MVST}(A\mu,ARA^\t,A\Delta,\nu)$. This gives
\begin{align}
\E_{p(y|x)}&\left[\tfrac{\d^2}{\d x^2}  \log\tcdf(\Delta^\t \Omega^{-\thalf} \zz \sqrtr;0,L,\nu+n_y) \right] \nonumber\\
=& -C^\t \Omega^{-\thalf} \E_{p(\zz|x)}\left[ \tfrac{\nu+n_y}{\nu+\zz^\t \zz} \widetilde{R}_\zz \widetilde{R}_\zz^\t \right] \Omega^{-\half} C ,
\end{align}
where
\begin{align} \label{eq:tildeR_def}
\widetilde{R}_\zz =& \big( \tcdf(\Theta^\t \zz \sqrtr; 0,L,\nu+n_y) \big)^{-1} (I_{n_y}-\tfrac{1}{\nu+\zz^\t \zz} \zz\zz^\t) \Theta\nonumber\\
&\times 
 \left(\tfrac{\d}{\d u} \tcdf(u; 0,L,\nu+n_y)\Bigr|_{u=\Theta^\t \zz\sqrtr} \right)^\t,
\end{align}
where $L\!=\!\eye_{n_y}-\Theta^\t \Theta$. Thus, the Fisher information for the measurement model $y|x\sim\mathrm{MVST}(Cx,R,\Delta,\nu)$ is
\begin{align}
&\mathcal{I}(x) = \E_{p(y|x)} \left[ -\tfrac{\d^2}{\d x^2} \log p(y|x) \right] \\
=& C^\t \Omega^{-\thalf} \!\! \E_{p(r|x)} \left[ \tfrac{\nu+n_y}{\nu+r^\t r} (I_{n_y} \!-\! \tfrac{2}{(\nu+\zz^\t \zz)^2} \zz\zz^\t \!+\! \widetilde{R}_\zz\widetilde{R}_\zz^\t) \right] \Omega^{-\half} C ,
\end{align}
where $r|x\sim\mathrm{MVST}(0,I_{n_y}\!-\!\Theta\Theta^\t,\Theta,\nu)$, $\Theta = \Omega^{-\half}\Delta$, $\Omega=R+\Delta\Delta^\t$, and $\widetilde{R}_\zz$ is defined in \eqref{eq:tildeR_def}.

\end{document}